\begin{document}

\section{Introduction}
A topographic map is a ubiquitous brain structure which connects two brain regions: a 
pre-synaptic region and post-synaptic region \cite{Udin1988-by}. The structure is defined by 
the relationship that cells that are physically neighbouring in the pre-synaptic region will 
connect to physically neighbouring cells in the post-synaptic region and is the simplest 
instance of a topological map \cite{Bednar2016-lg}. Topographic maps have been shown to 
have remarkable regeneration and re-organisational properties but this study will focus only 
on their development \cite{Buonomano1998-wi, Merzenich1983-uz, Robertson1989-sm, 
Kaas1990-kw}. Historically, topographic development was thought to be mediated by either 
Hebbian activity-based mechanisms, or chemotactic signalling mechanisms but these now 
are thought to more typically work in tandem \cite{Cang2013-dw}.

The dense feed-forward connectivity pattern present in topographic systems make neural 
field theories (NFT) an attractive paradigm for modelling electrical activity patterns in 
topographic systems.  An NFT is a continuum model where the spiking activity of many neural 
inputs are averaged into a smoothly varying function over temporal and spatial locations. A 
theory of topographic development was proposed for an NFT which relied on static inputs in 
the pre-synaptic region resulting in activity patterns in the post-synaptic region stabilising to 
be time-independent and thus allowing a simple Hebbian plasticity rule to be applied 
\cite{Amari1977-gc}. The requirement for activity to stabilise before updating weight is also a 
feature of more general cortical map plasticity models \cite{Stevens2013-ly, Bednar2004-xl}. 
A more recent study used an NFT to model topography in the somatosensory cortex under 
thalamocortical plasticity using Oja's rule \cite{Detorakis2012-eh}. The assumption of static 
inputs limits the range of biological systems for which the theory developed in these works 
can apply in development. 

A proposed candidate model organism is the mouse retinotopic map: the set of connections 
that map retinal cells in the eye to cells in the superior colliculus (SC) \cite{Ito2018-ef, 
Seabrook2017-kx}. This is distinct from other topographic projections such as 
somatosensory and tonotopic maps \cite{Kaas1997-oz}. The mouse develops topography 
using three mechanisms: chemotaxis, competition, and activity based refinement 
\cite{McLaughlin2003-yy, Cang2013-dw}. The activity component of development involves 
three stages of spontaneously generated retinal waves which are thought to refine a coarse 
topography of dendritic arbours (grown from afferent neurons which are guided 
topographically by a combination of chemotaxis and competitive interactions) down into a 
precise point-to-point mapping \cite{Cang2013-dw, Bansal2000-ts, Maccione2014-ha}. 
Disruptions to the patterning of these waves have been explored by knocking out the 
nicotinic-acetylcholine receptor $\beta2$ which generates fast-spreading waves and thus a 
hyper-correlation -- where neurons are correlated at a far greater inter-neuron distance than 
in wild type -- between any two given retinal cells \cite{Stafford2009}. The effect of the 
$\beta2$ knock-out is to reduce the precision of the resulting topographic map: the 
receptive field of a given SC location is large with respect to wild-type 
\cite{Mrsic-Flogel2005-xp, McLaughlin2003-yy, Chandrasekaran2005-ug}.

Modelling efforts in this field have focused recently on predicting map structure of various mutants and a unified model of chemotaxis, activity, competitive mechanisms was shown to give the best account of the data \cite{Hjorth2015-le, Triplett2011-jk, Tikidji-Hamburyan2016-sn, Tsigankov2006-uy}. The mutants examined were predominantly genetic perturbations of the chemical gradients and therefore activity was not considered as a major focus. These models by construction are unable to capture the various spatio-temporal statistics of the retinal waves condensing them all into a single correlation measure as a function of SC-distance; a corollary is that they have not been able to reproduce the effect of the $\beta2$ knock-out when the correlation function is adjusted to match the knock-out \cite{Lyngholm2019-fs}. While historical models have allowed for the incorporation of spatial patterning in the input stimulus, they do not consider time variations in stimulus at a time-scale below that of plasticity implicitly assuming all transient neuronal information (such as spatio-temporally patterned stimulus waves) is averaged out \cite{Willshaw1976-ew, Kohonen1982-nd, Stevens2013-ly}. Recent efforts in a separate unified model which can incorporate dynamic activity were not able to quantitatively account for the $\beta2$ mutant data and are too computationally demanding for rigorous statistical analysis \cite{Godfrey2009-ts}. There is a need for theory which can analyse and predict the effects of time-varying stimuli on the organisational structure of maps.

In this paper we aim to develop theory for modelling the development of topographic 
systems which can incorporate complex spatial-temporal patterns of activity, such as those 
seen in mouse. A candidate theoretical framework of Hebbian-based plasticity that can 
incorporate time-signatures of activity, spike timing dependent plasticity (STDP), has been 
developed for NFT \cite{Robinson2011-ve, Abbott2000-gl}.  The framework is continuous, 
rather than discrete, which allows us to investigate synaptic efficacy between locations 
rather than modelling individual synapses. We shall demonstrate that NFT can support the 
refinement and establishment of precise topography via waves of propagating activity and 
biologically reasonable Hebbian learning rules and therefore establish it as a useful model to 
study the development of topographic systems. Moreover, we will validate the model against 
the $\beta2$ knock-out and make predictions about the time-scale on which the Hebbian 
activity operates. A glossary of symbols that will be used throughout the paper is shown in 
Table \ref{table:glossary}.

\begin{table}
	\centering
	\begin{tabular}{| l || l |}
		\hline
		Symbol & Description \\
		\hline
		$S$ &Feed-forward kernel: pre-to-post regions \\
		$W$ &Recurrent kernel: post-to-post region \\
		$H$ &Synaptic evolution by spike-time envelope \\
		$Q$ &Map of membrane potential to spike-rate\\
		$u/U$ &Post-synaptic membrane potential/rates\\
		$a/A$ &Pre-synaptic membrane potential/rates\\
		$h$ &Form of activity waves\\
		$\Theta$ &Heaviside Theta function\\
		$\delta$ &Dirac-Delta distribution\\
		$\eta$ &Representation of white noise\\
		\hline
	\end{tabular}
	\caption{Symbols which are used to represent biological objects and/or processes. The parameters which are used to specify each functional form are omitted and detailed in later sections. \label{table:glossary}}
\end{table}

\section{Model}
We will choose a simple model architecture that closely imitates the systems of interest: input from a continuous pre-synaptic field of nerve cells stimulates activity in a continuous post-synaptic field of nerve cells via a collection of feed-forward connections. These feed-forward connections will evolve under a plasticity rule governed by the spatio-temporal relations between the input and induced activity in the pre-synaptic and post-synaptic fields respectively. The activity in the post-synaptic field will be supported by inhibitory and excitatory sets of isotropic recurrent (or lateral) connections which, for simplicity, we shall assume to be static; for a description of non-isotropic recurrent kernels refer to \cite{Graben2014-pm, Schwappach2015-jy}. Changes in the feed-forward connections are dictated by firing activity in the pre-synaptic and post-synaptic fields. The activity dynamics in the post-synaptic field will be modelled by a neural field equation which couples a membrane potential and firing activity spatio-temporally. The pre-synaptic field activity could be modelled the same way but because there is no feed-back from the post-synaptic field it is sufficient to simply instantiate it which can be motivated by experimental spiking data \cite{Meister1991-mu}. The model architecture is summarised in Figure \ref{model-summary} and we shall now explicitly lay out the details of the model. 
\begin{figure}[h!]
	\centering
	\includegraphics[width=0.7\textwidth]{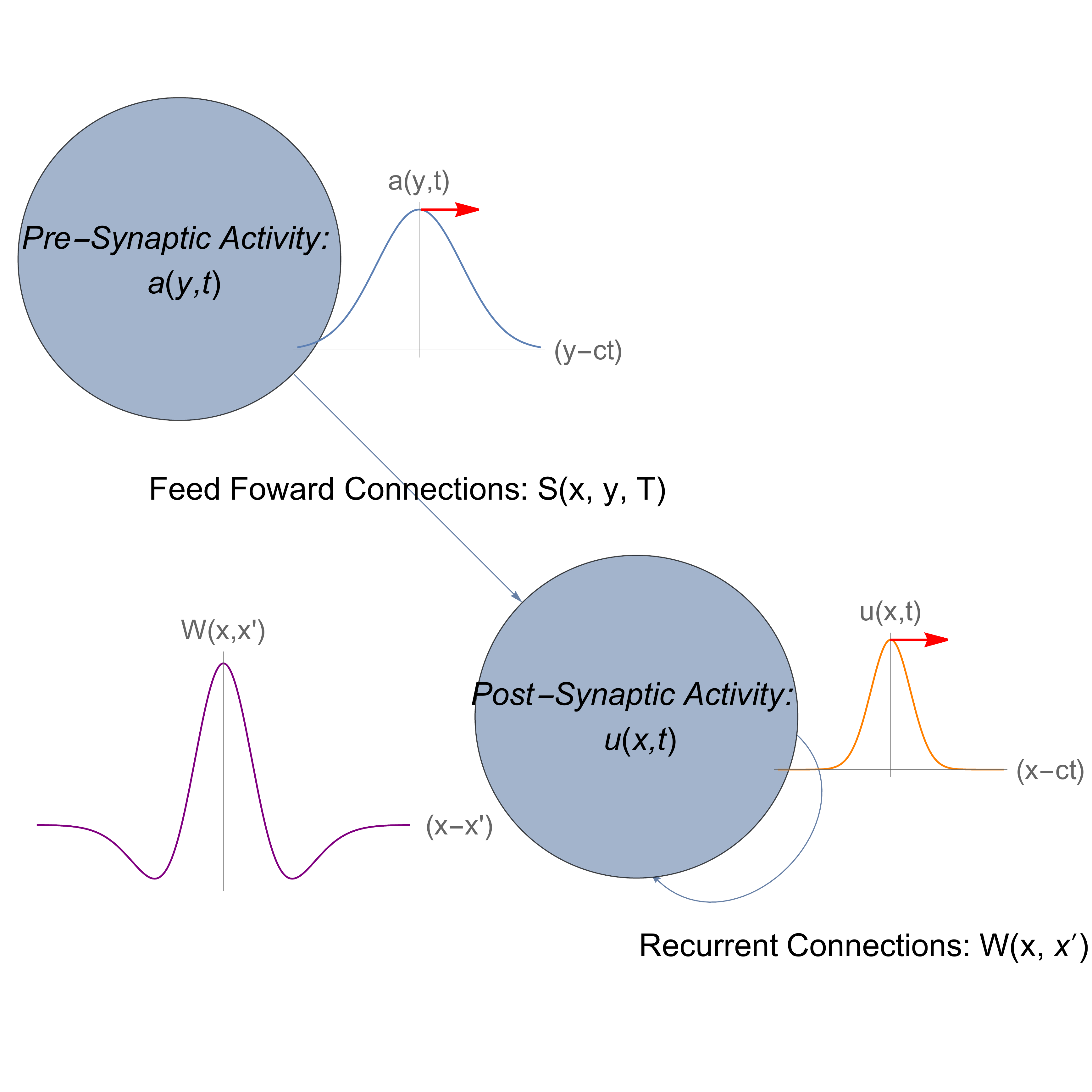}
	\caption{The connections and directionality of the model: activity is feed-forward from the 
	pre-synaptic region by the structure of interest and is spatial-temporally propagated by a 
	time-differential operator and spatial convolution of inhibitory and excitatory recurrent 
	connections. We show cartoons of typical propagating activity patterns and recurrent 
	connections but not feed-forward connections as determining these are the object of this 
	study. The generated signal in the post-synaptic region and the driving signal in the 
	pre-synaptic region are then convolved with a plasticity window to inform the synaptic 
	changes on a slow time scale which is indicated by the variable $T = \epsilon t$ for some 
	small $\epsilon$. \label{model-summary}}
\end{figure}
\paragraph{Representation of Topography}
We need to establish what we mean by topography in the continuous sense. We aim to 
preserve two things: the neighbourhood projection, and the excitatory feed-forward nature of 
the network. To preserve the neighbour-neighbour relation the connections,  here referred to 
as a synaptic distribution, labelled $S$, and measured in synapses per mm$^2$,  should take 
the form:
\begin{equation}
	S(x,y,T) = S(|x-p(y)- \rho|, T),
\end{equation}
where $p(y)$ is some monotonically increasing function and $\rho$ is some constant to indicate that a coordinate shift still permits a topographic mapping. The excitatory feed-forward nature means that a patch of activation in the pre-synaptic field should activate a local patch of the post-synaptic field associated with its topographically projected location. Therefore, $S$ should decay quickly at infinity, be positive at the topographically projected location, and have a finite (small) radius at which it transitions to being negative. Alternatively, it can be strictly positive but quickly decaying such that it never over-powers the recurrent inhibitory connections; see Figure \ref{fig:topographicexamples}.

\begin{figure}[h!]
	\begin{subfigure}{0.4\textwidth}
		\centering
		\includegraphics[width=\textwidth]{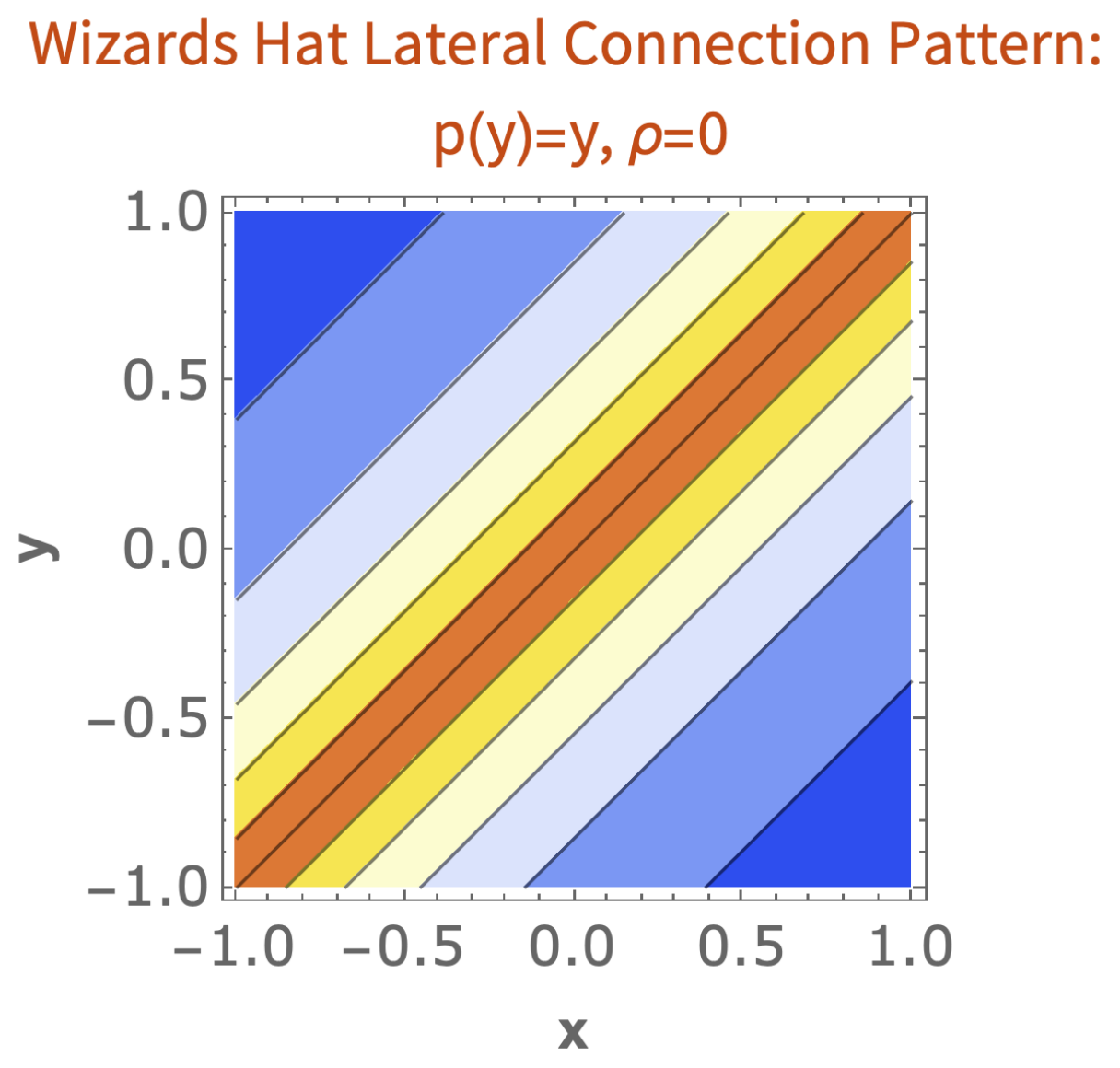}
		\caption{}
	\end{subfigure}
	\begin{subfigure}{0.52\textwidth}
		\centering
		\includegraphics[width=\textwidth]{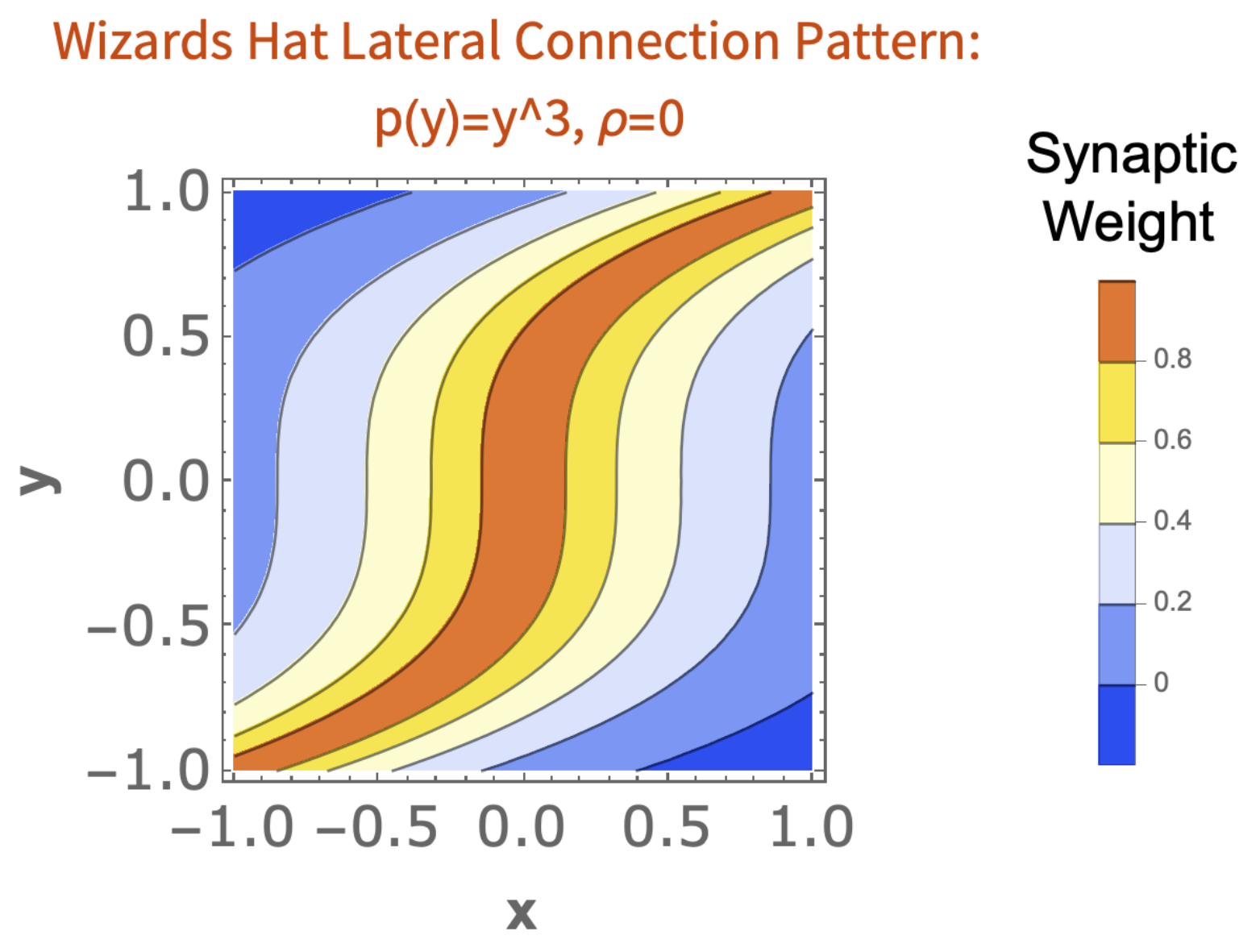}
		\caption{}
	\end{subfigure}
	\caption{\label{fig:topographicexamples}  Two examples of topographic organisation using a Wizard hat style function: a) shows a linear relationship between axes and for while b) shows a cubic relationship between axes. Both are topographic but a) has an even representation of the pre-synaptic field across the post-synaptic field while b) compresses the representation at the boundary and enlarges the interior.}
\end{figure}
\paragraph{Neural Field Theory}
We shall choose the NFT formulation proposed by Amari \cite{Amari1977-gc} 	 which 
considers both 
excitatory and inhibitory connections in the same kernel $W$. We shall 
consider a kernel $S$ that also couples the pre-synaptic and post-synaptic regions. It is this 
kernel in which we aim to demonstrate the evolution of topography. We shall denote the 
electrical activity, measured in millivolts (mV), of the pre-synaptic field by $a(x,t)$ and in the 
post-synaptic field by $u(x,t)$, and choose the firing rate function to be a sigmoid-logistic 
function:
\begin{equation}
	Q(u)=\frac{Q_\text{max}}{1+\exp(-\beta(u-\theta))},
\end{equation} 
where $\beta$ and $\theta$ dictate the steepness of the curve and the threshold respectively, and $Q_{\text{max}}$ is the maximal firing rate; { the units of each are $mV^{-1}$, $mV$, and spikes per second and can be found in Table \ref{table:parameters}}. The activity dynamics are then governed by the internal dynamics mediated by $W$ and the input provided through the pre-synaptic field, $Q(a)$, and its transfer through $S$:
\begin{align}
	\begin{split}
		u(x,t)&
		+\tau \frac{\partial u(x,t)}{\partial t} 
		- \int_{-\infty}^{\infty} W(x, x^\prime) Q(u(x^\prime,t)) dx^\prime = \int_{-\infty}^{\infty}S(x,y, T) Q(a(y,t)) dy.
	\end{split}
	\label{eq:amari}
\end{align}
Note that the time variable $T$ is on a much slower time scale which is realised by setting 
$t=\epsilon T$ for $0<\epsilon \ll 1$. For the purposes of solving  \eqref{eq:amari} these 
connections can be considered effectively constant. We assume for simplicity that the 
recurrent connections $W$ remain constant throughout the course of synaptic development 
and are homogenous. Following Robinson \cite{Robinson2011-ve}  a plasticity window is 
defined as a rapidly 
decaying envelope $H$ that weights the cross-correlation of the input and response signals 
in a population in the same fashion as biologically-inspired plasticity rules weight individual 
spikes of a neuron \cite{Robinson2011-ve}. These plasticity windows have been observed in 
several organisms and brain regions; a typical window has a time constant on the order of 10s 
of milliseconds but have been observed to be on the order of 10s of minutes 
\cite{Froemke2002-be, Zhang2000-lb, Allen2003-rw, Markram1997-ln, Citri2008-kv}. The 
average synaptic dynamics are given by averaging over a time-window which is longer than 
the time-scale of the plasticity window and of the inverse frequencies of the forcing and the 
response stimuli but shorter than any long term plasticity changes:
\begin{align}
	\tau'\frac{dS(x,y,T)}{dT} = \int_{-\infty}^{\infty} \langle U(x,T+s)H(s)A(y,s) \rangle ds, \label{eq:robinson}
\end{align}
where $U=Q(u), A=Q(a)$ (the firing rates of the post-synaptic and pre-synaptic populations respectively), $\langle \cdot \rangle$ denotes averaging, and $\tau^\prime$ is the time-scale of synaptic dynamics. In the case of no electrical activity present in the pre-synaptic field there will be a constant level of spontaneous firing $A$ inducing an electrical activity and firing rate $U$ which in turn will lead to run-away synaptic dynamics. We are interested in the dynamics of the average rate of synaptic change and the expected synaptic values; in later sections this will be taken as an adiabatic expansion and averaging over stimulus input locations. Therefore, the above equation should include a noise term $\eta$, which we shall take to have a strength $\kappa$, to incorporate the small deviations of spontaneous activity:
\begin{align}
	\tau^\prime\frac{dS(x,y,T)}{dT} = \int_{-\infty}^{\infty} \langle U(x,T+s)H(s)A(y,s) \rangle ds + \kappa\eta(x,y,t).
\end{align}
\paragraph{Regularisation}
Several regularisation rules have been posed to stabilise these unstable Hebbian dynamics and are broadly classified in the form of subtractive and multiplicative rules \cite{Abbott2000-gl}. In this study we choose a subtractive normalisation rule to stabilise the dynamics, assuming there is some atrophic factor to regulate the unbounded growth of synapses, governed by parameter $\lambda$, released at each location:
\begin{align}
	\begin{split}
		\tau^\prime&\frac{dS(x,y,T)}{dT} + \lambda S(x,y,T)=  \int_{-\infty}^{\infty} \langle U(x,T+s)H(s)A(y,s) \rangle ds + \kappa\eta(x,y,t).
	\end{split}
	\label{eq:stdpfinal}
\end{align}
The idea of a synaptic decay on the basis of metabolic demands has also been introduced in 
a study of organisational behaviour in V1 \cite{Wright2013-td}. We shall study the dynamics 
of  \eqref{eq:stdpfinal} for the remainder of this text.
\paragraph{Perturbations}
We shall assume that in the absence of forcing activity that the post-synaptic field relaxes to a constant solution i.e. there is a constant level of spontaneous firing in the pre-synaptic and post-synaptic fields; note that this is not necessarily the case \cite{Coombes2005-mt}. We then assume that all activity dynamics are small perturbations from these constant rates. Furthermore, if one makes the assumption that the firing rates can be expressed as perturbations from a baseline firing rate,  $U(x,t) = U_0 + \delta U(x,t)$ and $A(x,t) = A_0 + \delta A(x,t)$, then taking Fourier transforms the average change in plasticity in the un-regularised dynamics can be expressed as:
\begin{equation}
	\frac{1}{2\pi}  \int_{-\infty}^{\infty} \delta \hat{U}(x,\omega) \hat{H}(\omega)^* \delta\hat{A}(y,\omega)^*d\omega,
\end{equation}
where $\hat{\cdot}$ denotes the Fourier transform, and $\cdot^*$ denotes complex conjugation \cite{Robinson2011-ve}. 

\paragraph{Input Stimulus}
We shall consider two classes of input stimulus: mono-directional and bi-directional (radial) 
waves. Mono-directional waves propagate either to the left/right at speed $c$ starting at 
some time $t_0$ and some starting position $y_0$ finally finishing at some time $t_1$. We 
note that these terms are rooted in a two dimensional consideration of the problem. A 
mono-directional wave might travel along a single radial angle whilst a radial wave travels 
isotropically. These choices allow for a description of the waves observed in the retina  
\cite{Meister1991-mu, Stafford2009, Ackman2012-uu}. Of the two the mono-directional 
wave is the most appropriate model but the bidirectional wave is an equivalent but 
analytically preferable case as we show in Section \ref{sec:monotoradial}. Letting 
$r(y,t)=y-ct-y_0$, these inputs accordingly take the form:
\begin{equation}
	a(y,t) = (\Theta(t-t_0)-\Theta(t-t_1))h(r(y,t)). \label{wave:plane}
\end{equation}
Radial inputs are similar, simply propagating in both directions:
\begin{align}
	a(y,t) = \frac{(\Theta(t-t_0)-\Theta(t-t_1))(h(r(y,t))+h(r(y,-t)))}{2}. 
	\label{wave:radial}
\end{align}
In both cases $h$ is used to denote the shape of the propagating wave-form.  The choice of 
$h$ is left to be general but can be thought of as a travelling Gaussian wave-packet. A 
function may be approximated by a linear sum of appropriately weighted Gaussian's and so 
this forms a basis set and we will consider the simple case in Section \ref{sec:param}.
\paragraph{Plasticity Windows}
There are two general forms of plasticity considered: time symmetric and time asymmetric plasticity. Time symmetric plasticity, also called Correlation Dependent Plasticity (CDP), means that connections are strengthened by spikes that are separated by short times and weakened by medium-long time separated spikes, but in which the ordering of the spikes is not important. Time asymmetric plasticity, or STDP, means that not only does the temporal closeness of pre-synaptic and post-synaptic spikes matter but the ordering in which they occur: post-synaptic firing that occurs before pre-synaptic firing weakens the connection and vice-versa. A canonical form of these two rules expressed as a plasticity envelope is given by:
\begin{equation}
	H(s) = \begin{cases}
		A_+\exp(-\frac{s}{t_p}) \quad &s\geq 0\\
		A_-\exp(\frac{s}{t_p}) \quad &s <0
	\end{cases}
\end{equation}
where $A_-=A_+$ for CDP and $-A_-=A_+$ for STDP, and $t_p$ is the time-scale of the plasticity \cite{Abbott2000-gl}. The Fourier transforms of these learning rules are:
\begin{align}
	\hat{H}_{CDP}(\omega) &= \frac{2A_+}{1+\omega^2 t_p^2} \label{rule:CDP}\\
	\hat{H}_{STDP}(\omega) &= \frac{2A_+\omega it_p}{1+\omega^2 t_p^2}. \label{rule:STDP}
\end{align}
In summary, a membrane signal is generated in the post-synaptic region on a fast-time scale which is supported by recurrent connections and generated by input from a pre-synaptic region. The spatial-temporal patterns of the pre-synaptic and post-synaptic activity then inform synaptic changes between the two regions on a slow time scale in accordance with a plasticity rule.

\section{Analysis}
We shall make the assumption that our connectivity kernels, pre-synaptic stimuli, and post-synaptic activity and firing rates are elements of Schwartz space i.e. the functions and derivatives that define these rates decay quickly at long range and they are localised. This assumption is made to ensure bio-physical realism. Connectivity kernels typically have short-range and long-range interactions but they do not interact at all with very distal connections and their functions must accordingly decay at infinity. Similarly, due to these recurrent connectivity kernels, electrical signals only seem to be able to support themselves on finite distances and they too must accordingly decay. The assumption of Schwartz functions ensures that we can take Fourier transforms and makes formulating our problem in Fourier space desirable.

\paragraph{Approximating Input Stimulus}
The inputs that we specified earlier are biologically realistic but will become more tractable if we are able to remove one of the Heaviside functions; this would amount to a stimulus propagating to infinity after being initialized. To show this we need to demonstrate that the synaptic change induced by this different stimulus is arbitrarily small when compared to the synaptic change induced by the true stimulus. This is realised by the rapid decay of the plasticity window and shown formally in Lemma \ref{lemma:decay}; see Appendix A.

\paragraph{Activity Dynamics}
It was reasoned on physical grounds that in the absence of pre-synaptic stimulation the only 
post-synaptic solution is a static, constant level of activity; we are interested in calculating 
perturbations away from these baseline levels.  We shall assume the baseline is sufficiently 
close to the origin that the logistic function is analytical and has a convergent Taylor series 
expanded around $u = u_0$.  Therefore, a good approximation is:
\begin{equation}
	Q(u) = Q(u_0) + u Q'(u_0). \label{eq:firingapproximation}
\end{equation}
This can then be inserted in \eqref{eq:amari} and a Fourier transform can be taken to yield:
\begin{align}
	\hat{u}(k,\omega) = Q&(u_0)\Gamma(k,\omega)(\hat{W}(k)+\hat{S}(k))\delta(k)\delta(\omega) + Q'(u_0)\hat{S}(k)\Gamma(k,\omega)\hat{a}(k,\omega),
\end{align}
with $\Gamma(k,\omega)=(1 + i\tau\omega - \hat{W}(k))^{-1}$. Now recognising that 
$Q(u_0)\Gamma(\omega,k)(\hat{W}(k)+\hat{S}(k))\delta(k)\delta(\omega)$ corresponds to 
the static solution, i.e. the baseline activity level, we can write an expression for the Fourier 
transform of the perturbation of the activity level: 
\begin{equation}
	\delta\hat{U}(k,\omega) =\hat{S}(k)\Gamma(k,\omega)\hat{a}(k,\omega),
\end{equation}
where $U(x,t)=Q(u(x,t))$. The Fourier Transform of the perturbation from the baseline rate in 
the pre-synaptic field, $\delta A(y,t)$ is trivial to compute: $ \delta \hat{A}(k,\omega)=\delta 
\hat{a} (k,\omega)$. This is all we need to explicitly compute the synaptic change between 
any two points in the pre-synaptic and post-synaptic field. 
\paragraph{Synaptic Dynamics}
We shall assume that the synaptic field, and synaptic changes are isotropic; $S(x,y,T) = S(x-y,T)$ for all $T$. Then making the approximation of the firing rate, and taking spatial Fourier transforms the synaptic change can be written:
\begin{equation*}
		\delta(p+k)\frac{d\hat{S}(k,T)}{dT} = \delta(p+k)(S_0 - S_1 \hat{S}(k,T)) +  S_2\int_{-\infty}^{\infty}  \hat{a}(\omega,p)\hat{a}(\omega,k)^* \hat{H}(\omega)^* \hat{S}(k,T)^*\Gamma(\omega,k)  d\omega ,
\end{equation*} 
where $S_0$, $S_1$, and $S_2$ have absorbed the time constant, regularisation constants, baseline firing rate, and the Fourier normalisation terms. We have kept the sign of $S_1$ negative to indicate its relationship with the decay constant $\lambda$. Integrating with respect to $p$, the above equation may be solved as:
\begin{align}
	&\frac{d\hat{S}(k,T)}{dT} = S_0 - S_1\hat{S}(k,T) + S_2\hat{S}(k,T)^*\int_{-\infty}^{\infty}\mathcal{B}(\omega)\hat{a}(\omega,k) \hat{H}(\omega)^* \Gamma(\omega,k) d\omega,
\end{align}
where $\mathcal{B}(\omega) =  \int_{-\infty}^{\infty} \hat{a}(\omega,p)^*dp$. The connectivity kernel $S$ in position space is physically required to be real. We can write it as the composition of odd and even functions. Then, from conjugate symmetry it follows that its Fourier transform is then composed of a real part consisting of the linear combination of the Fourier transforms of its even components, and an imaginary part consisting of the linear combination of the Fourier transforms of its odd components. For $S$ to remain real its derivative must have an even function as its real component, and an odd function as its imaginary component. Denoting,  
\begin{equation} 
	G(k)= \int_{-\infty}^{\infty} \left(\int_{-\infty}^{\infty}  \hat{a}(p,\omega)^*dp\right)\hat{a}(\omega,k) \hat{H}(\omega)^* \Gamma(k,\omega) d\omega,
\end{equation}
we can see that if $G(k)$ is even and real, or odd and purely imaginary, then the above equation can be separated into odd and even parts and solved as two independent ODEs. Attention will be restricted to the even form of $G(k)$ as we will show in the next section that this must be the case. Denoting $S_{O}(x,T)$ and $S_E(x,T)$ to be the odd and even parts of the coupling function in position space these ODEs are then: 
\begin{align}
	&\frac{d\hat{S}_O(k,T)}{dT} =-(S_1 + S_2 G(k)) \hat{S}_O(k,T) \\
	&\frac{d\hat{S}_E(k,T)}{dT} = S_0 + (S_2 G(k) - S_1) \hat{S}_E(k,T) .
\end{align}
Therefore, in the asymptotic limit,   provided $S_1 + S_2 G(k)>0$ the odd components of the 
initial organisation decay to zero and provided $S_1 > S_2 G(k)$ the even components have 
solution:
\begin{equation}
	\hat{S}_E(k) = \frac{S_0}{S_1 - S_2G(k)}. \label{eq:asymptoticsynapse}
\end{equation}
The final organisation is therefore dictated by the initial even components and the form of 
$G(k)$. We will show that $G(k) > 0$   which is sufficient to satisfy the above conditions. The 
form of $G$ is prescribed the learning rule employed and the input stimulus used, we shall 
refer to it as the training function.

\paragraph{Mono-Directional Propagation} \label{sec:monotoradial}
If we suppose the input stimulus is $a(y,t) = \Theta(t) h(y-ct-y_0)$ then it is fairly 
straightforward to show that the training function $G$ is not even and therefore will not work, 
for our purposes, as a training function. However, if we assume that the synaptic changes are 
adiabatic or reasonably small and we assume that the proportions of waves propagating left 
and right are equal then the average synaptic dynamics induced by inputs of the 
mono-directional form  \eqref{wave:plane} are the same as the dynamics induced by inputs 
of the radial form  \eqref{wave:radial}. Therefore, we shall continue the analysis for radially 
propagating inputs.
\paragraph{Radial Propagation}
Presume the input stimulus is in the form $a(y,t) = \Theta(t) (h(y-ct-y_0)+h(y+ct-y_0))$. Taking two Fourier transforms yields:
\begin{align} 
	\begin{split}
		\hat{a}(p,\omega)& = \frac{1}{2} e^{-2\pi i y_0 p}\hat{h}(p) \left( \delta (w+cp) + \delta 
		(w-cp) + \frac{2iw}{\pi (w-cp)\pi (w+cp)}\right).
	\end{split}
\end{align}
Then integrating with respect to $p$ by using the Cauchy Residue Theorem and evenness of the last term and $\hat{h}$ gives:
\begin{equation} 
	\label{integrateap}
	\int_{-\infty}^{\infty}\hat{a}(p,\omega)^*dp = \left(1 + \frac{2}{c} \right) \hat{h}\left(\frac{\omega}{c}\right)^* \cosh\left(2\pi i y_0 \frac{\omega}{c}\right).
\end{equation}
$G(k)=G(k; y_0)$, and if we assume that the synaptic changes at each time step are small then the average synaptic change can be written as:
\begin{equation}
	\left \langle \frac{d\hat{S}(k,T)}{dT} \right \rangle = S_0 - S_1\hat{S}(k,T) + S_2 \hat{S}(k,T) \langle G(k; y_0, c) \rangle. 
\end{equation}
The asymptotic limit, which we are ultimately interested in, will approach this average and for 
the remainder of this work we shall drop the angle brackets. Let $g(k;c) = (\hat{H}(ck)^* 
\Gamma(k,ck)+\hat{H}(-ck)^* \Gamma(k,-ck))/c$. Equation \eqref{integrateap} can then be 
inserted into the expression for $G(k)$ and the Dirac-Deltas can be integrated. Then, we 
integrate out $y_0$ by assuming it is distributed over some interval of length $L$ giving 
exponential integral functions which vanish as $y_0 \rightarrow \infty$ yielding:
\begin{equation}
	\frac{d\hat{S}(k,T)}{dT} = S_0 + \hat{S}(k,T)\left(S_2 g(k;c) |\hat{h}(k)|^2 - S_1 \right). 
	\label{eq:finalsyn} 
\end{equation}
Showing that $g(k;c)$ is even may be done by direct substitution for both STDP and CDP 
rules under the assumption that both $W$ and $h$ are even. It then follows that all $G(k)$ 
are even. It remains to be shown that $G(k)$ is restricted to being non-negative or 
non-positive. All the scaling constants are positive and it is therefore clear for the STDP rule 
that $G(k)\geq 0$, while for the CDP rule $G(k)$ is never non-positive and is only 
non-negative if $\hat{W}(k)<1$. It is certainly possible that this is the case, but it is not true 
for common choices of $W$.  $W$ is typically chosen to be in the form of a ``wizards-hat" 
with short-range excitation and long range inhibition which is theoretically grounded and 
observed experimentally \cite{Sirosh1994-zv, Phongphanphanee2014-in}.
\subsection{Computational Analysis and Parameter Estimation}
So far, we have proceeded in a general manner without much reference to the recurrent connections or input stimulus (with the exception of wave-speed $c$) and the parameters and functional forms that characterise them. Here we shall specify explicit choices for both of these and examine the consequences on the organisation via computational means. We shall also try and estimate key parameters which contribute to the width, or arbor size, of the final organisation by means of Markov Chain Monte Carlo (MCMC) applied to wild-type and $\beta2$ knockout data. This estimation allows us to both validate the model and estimate biological quantities which have not yet been experimentally examined.

We choose a Gaussian to describe the wave-form of the input stimulus with amplitude and width (variance) parameters of $\sigma_1$ and $\sigma_2$ respectively and with Fourier Transform $\hat{h}(k) = \sigma_1 \sigma_2\exp(-k^2\sigma_2^2/2)$. We then choose a difference of two Gaussians to describe the recurrent connections: $\hat{W}(k) =  r_1\exp(-k^2 r_1^2/2) -  R_1r_2\exp(-k^2 r^2)$. The choice ensures that the dimensional requirements for the propagator are satisfied and that $|W(k)|<1$ for a suitable choice of recurrent connection parameters. These choices mean that there are 16 key biological parameters: $u_0, \tau, \tau^\prime, \kappa, \lambda, A_p, t_p, Q_\text{max}, \beta, \theta, c, \sigma_1, \sigma_2, R_1, r_1, r_2$.

\paragraph{Parameter Analysis} \label{sec:param}
Examination of  \eqref{eq:asymptoticsynapse} shows that $S_0$ (or $\kappa/\tau^{\prime})$ 
serves to stabilise the dynamics at the cost of introducing noise - the Fourier spectrum of a 
biologically realistic organisation will decay to a constant i.e. to a baseline level of white 
noise. A tolerable level of system noise is expected and we will assume that this noise can be 
filtered by some means. The denominator dictates the deviations from this noise and noting 
that for both CDP and STDP  $G(k) \rightarrow 0$ and $G(k)>0$ we have that physically 
viable solutions enforce $0<G(k)<S_1/S_2$ and non-viable solutions contain pairs of 
singularities (via evenness of $G$)  where $G(k)>S_1/S_2$ for some $k$. 

We note that an arbitrarily large wave-amplitude $\sigma_1$ can force a singularity in both 
cases and an arbitrarily large $c$ can force a singularity in the STDP case.  From this we can 
deduce for stability in the STDP case that the maximum wave speed is bound by a contour 
inversely proportional to wave-amplitude and vice-versa. Given the likely biological 
restrictions on amplitude this implies that wave speed could be dictated in part by wave 
amplitude. With this in mind we will set $\sigma_1 = 5$mV for the remainder of this work. This 
ensures that there is a baseline distinguishable level of firing when the wave reaches its peak 
amplitude but the neurons are not near a saturated level thus satisfying the assumptions 
required for the approximation in  \eqref{eq:firingapproximation}. We see that $u_0, 
\tau^\prime, \lambda, A_p, f_\text{max}, \beta, \theta$ are absorbed into $S_0, S_1$, and 
$S_2$ and their effects on the dynamics are immediate: they dictate the absolute measurable 
values of the organisation, not the form. We therefore set these parameters according to 
Table \ref{table:params} for the remainder of this work.

\begin{table}[h!]
	\centering
	\begin{tabular}[width=0.5\textwidth]{| l || l | l | l |}
		\hline
		Param. & Value & Units  & Description \\
		\hline
		$u_0$ & -58 & mV & Resting potential \\
		$\tau$  & 0.1 & s & Activity time-scale \\
		$\tau^\prime$  & 100 & s & Synaptic time-scale \\
		$\kappa $  & 0.001 & syn.mm$^{-2}$ & Synapse density \\
		$\lambda $  & 0.001 & $-$ & Decay rate\\
		$A_p $  & 1 & syn.mm$^{-2}$  s & Hebbian rate\\
		$t_p $  & 1.0 & s  & Hebbian time-scale\\
		$Q_\text{max} $  & 1 & s$^{-1}$   & Max firing rate\\
		$\beta$ & 0.26 & mV$^{-1}$ & Rate steepness\\
		$\theta$ & -45 & mV & Rate threshold\\
		$c$ & 0.1 & mm s$^{-1}$ & Wave-speed\\
		$\sigma_1$ & 5 & mV & Wave-amplitude\\
		$\sigma_2$ & 0.1 & mm & Wave-length\\
		$R_1$ & 1.08 & mm &Recurrent amplitude\\
		$r_1$ & 0.129 & mm &Inhibitory length-scale\\
		$r_2$ & 0.136 & mm &Excitatory length-scale\\
		\hline
	\end{tabular}
	\caption{\label{table:params} The choices made for each of the biological parameters used throughout the text, unless otherwise stated. The length scale is chosen to reflect the scale at which NFT typically applies in the brain and the appropriate scale for the mouse SC \cite{Robinson2005-en}. The resting membrane potential, the threshold voltage, and the voltage scale are estimated to be in line with electrophysiological recordings \cite{Shi2018-er}. These parameters should be carefully measured if a specific biological system is to be closely analysed. \label{table:parameters}}
\end{table}

We can see also that for CDP $G(k)$ attains its global maximum at $k=0$ meaning that its stability is determined entirely by the relationship between $S_1$ and $S_2$. Furthermore, with CDP synaptic changes have the potential be to large with no parameter available to mitigate them, in the STDP case the small timescale ensures that the changes are small and the adiabatic assumption is satisfied. We proceed only with the STDP case noting that extending the analysis to a CDP rule would be straightforward but care must be taken in the choice of parameters.

These choices, while considered, have reduced the problem to a single learning rule and several key parameters. We stress that the other parameters must be carefully measured for accurate predictions and are in some sense non-trivial: one can manipulate them biologically and cause a bifurcation in the organisation dynamics. Figure \ref{fig:stablity} demonstrates the manifold in the $c-\sigma_2$ plane for which the model presents plausible (stable) solutions. We have shown only a 2-dimensional slice of the overall manifold for which there are no solutions with singularities, but care should be taken in ensuring that any solution of interest lies within the volume of this manifold for all parameters.
\begin{figure}
	\centering
	\includegraphics[width=0.75\textwidth]{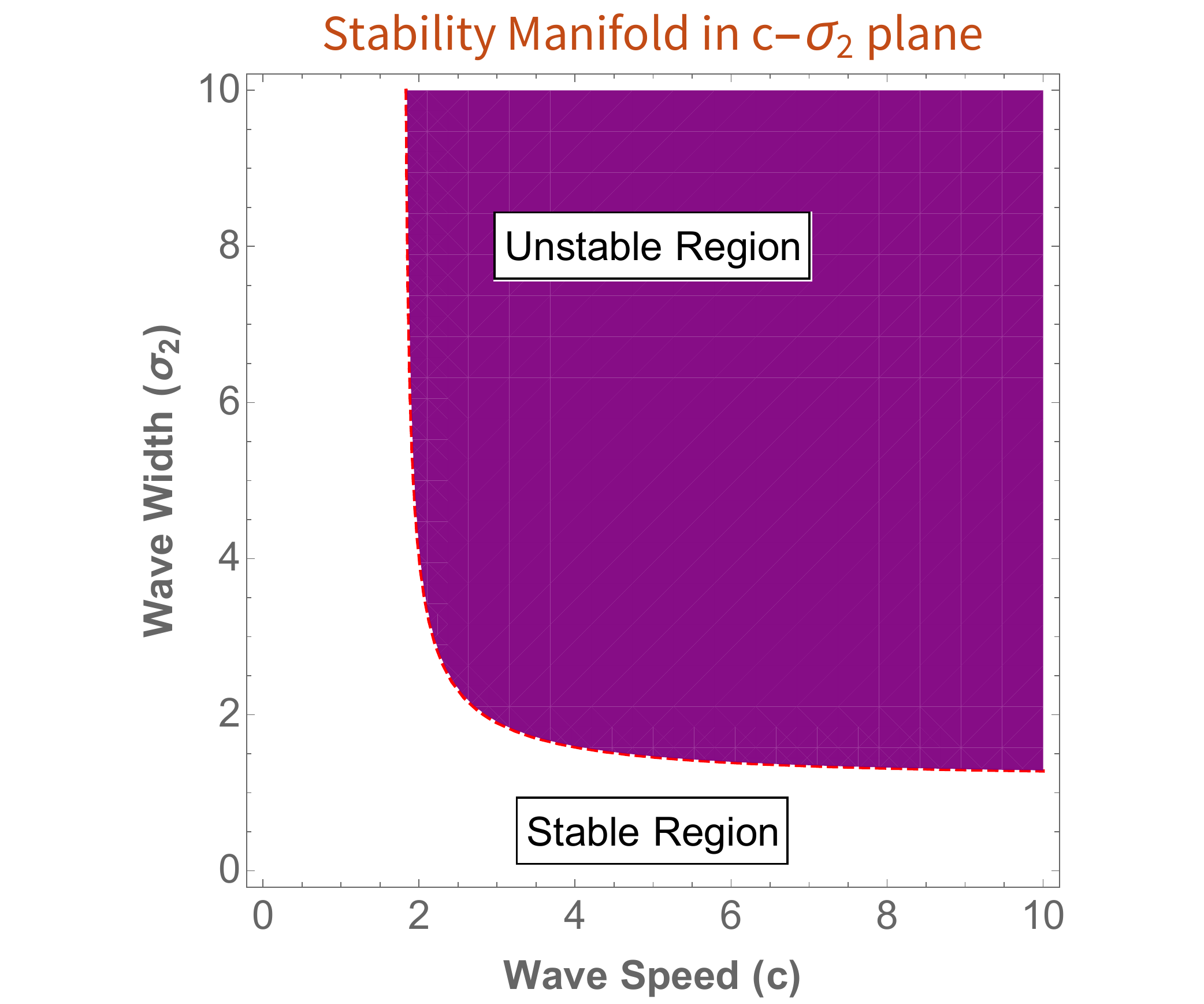}
	\caption{\label{fig:stablity} The manifold in $(c,\sigma_2)$ space which defines the stability of the final organisation. Below the surface solutions do not exhibit singularities and the training function is deemed to be stable; in general, small choices for the parameters exhibit stable synaptic organisations at the cost of arbitrarily small amplitude. The manifold appears to be well-above reasonable estimates for these parameters, ensuring the model is likely stable in plausible biological scenarios.} 
\end{figure}
\paragraph{Fourier Space}
The Fourier transform of $S$ has a characteristic bump near the origin which decays to a 
constant representing a baseline level of noise  i.e. $\hat{S}(k) = c_0 + \hat{\mathcal{S}}(k)$ 
where $\hat{\mathcal{S}}(k)$ is a symmetric function decaying quickly to zero. Note that it is 
possible for $\hat{\mathcal{S}}(k)$ to fall below the noise level which implies that the system 
will be out of phase and suppress signals at this wave length. A typical representation in 
Fourier and real space is shown in Figure \ref{fig:powerdistribution}a. 
\begin{figure}
	\centering
	\begin{subfigure}{0.4\textwidth}
		\centering
		\includegraphics[width=\textwidth]{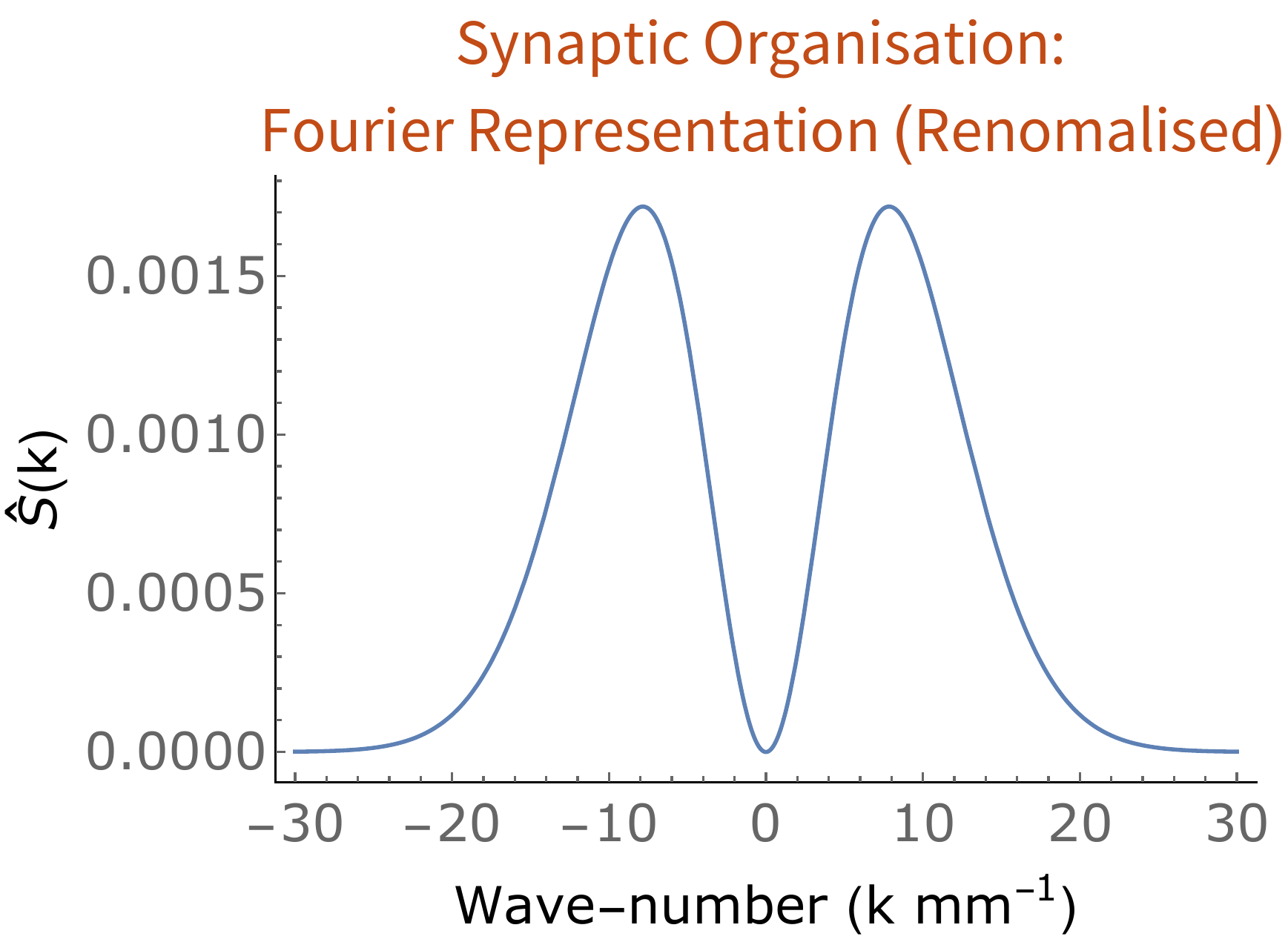}
		\caption{}
	\end{subfigure}
	\begin{subfigure}{0.4\textwidth}
		\centering
		\includegraphics[width=\textwidth]{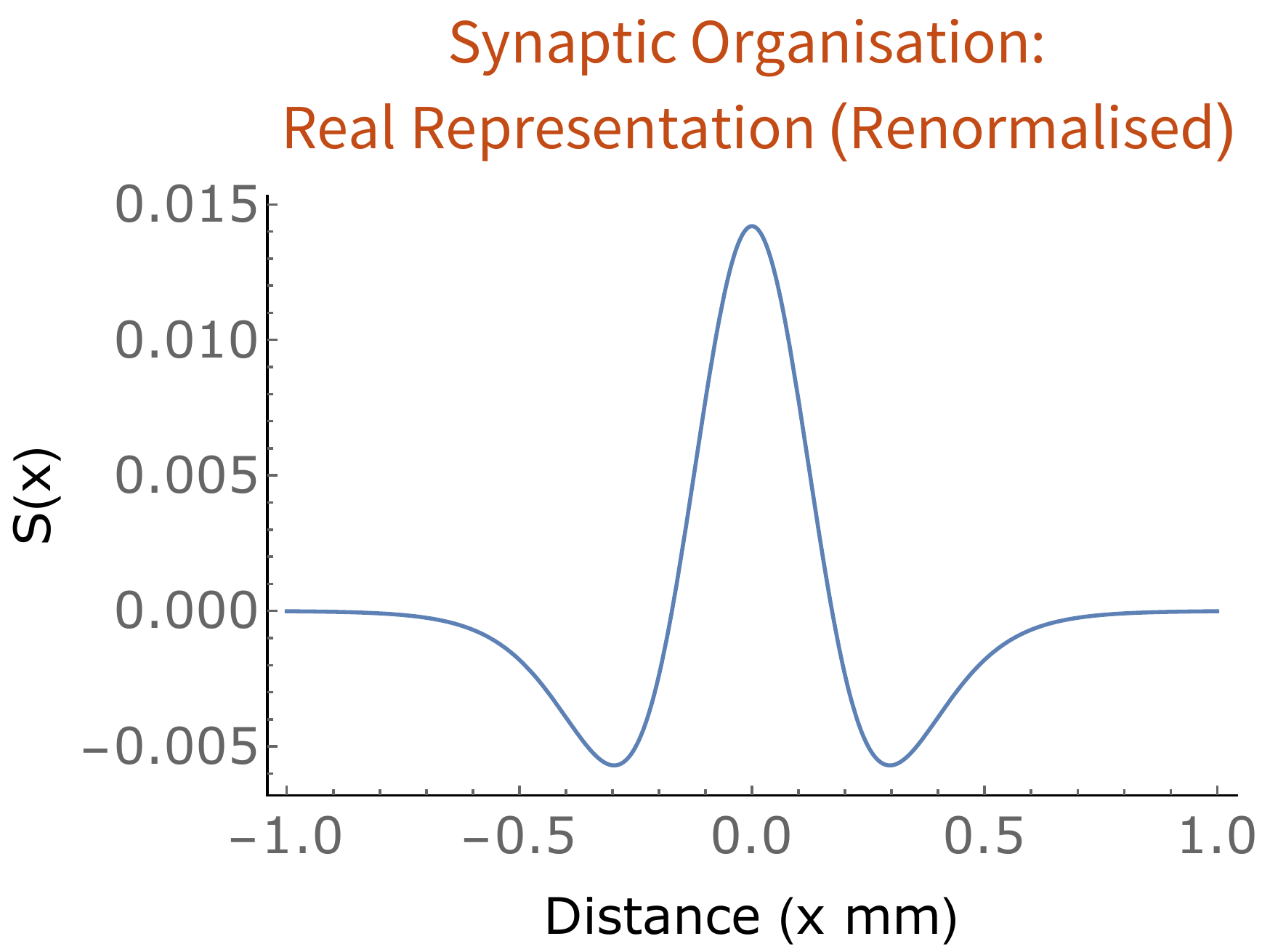}
		\caption{}
	\end{subfigure}
	\caption{\label{fig:powerdistribution} A typical organisation generated with the parameters shown in Table \ref{table:parameters} with (a) showing the representation in Fourier space, and (b) the representation in real space after re-normalisation.} 
\end{figure}

The distribution of the connections in physical space can be found by inverting its Fourier representation which presents a problem with the inclusion of the Dirac-Delta distribution introduced by $c_0$. This problem can be circumvented by realising that the baseline constant representation of all frequencies represents white noise which can be therefore be renormalised and omitted; see Figure \ref{fig:powerdistribution}b. This re-normalisation is done under the assumption that provided the amplitude of $\hat{\mathcal{S}}(k)$ provides a high enough signal-to-noise ratio then this system will be absorbed into already present biological noise which is filtered out in downstream calculations.
\paragraph{Refinement}
The steady state solution of the synaptic distribution $S$ takes its maximum at the origin and 
rapidly decays at large distances. The distributions feed-forward capability is therefore 
dictated by the magnitude at the origin and the rate of the decay. For precise signal 
transmission (or a refined retinotopy) the width of the distribution should be small with 
respect to the length scale. We can estimate width by   taking the inverse of the wave-length 
that maximises the power spectrum:
\begin{equation} \label{eq:width}
	\Omega(\vec{\rho}) =  \frac{1}{\text{argmax}_k\left| \hat{S}(k;\vec{\rho})\right|},
\end{equation}
where $\vec{\rho}$ represents the vector of parameters which define the model. We shall examine the width relationships in the plane of several pairs of variables within a stable region containing no singularities; shown in Figure \ref{fig:parametereximinations}. Refinement tends to decrease in accordance with decreases in $c, \sigma_2, (r_1/r_2), R_1, t_p$, and $\tau$. On the biological scales of interest for the current work the decreases do not appear to be substantial in the $R_1$ and $\tau$ directions. In general the relationships between the variables are non-linear.

\begin{figure*}[h!]
	\centering
	\begin{subfigure}{0.45\textwidth}
		\centering
		\includegraphics[width=\textwidth]{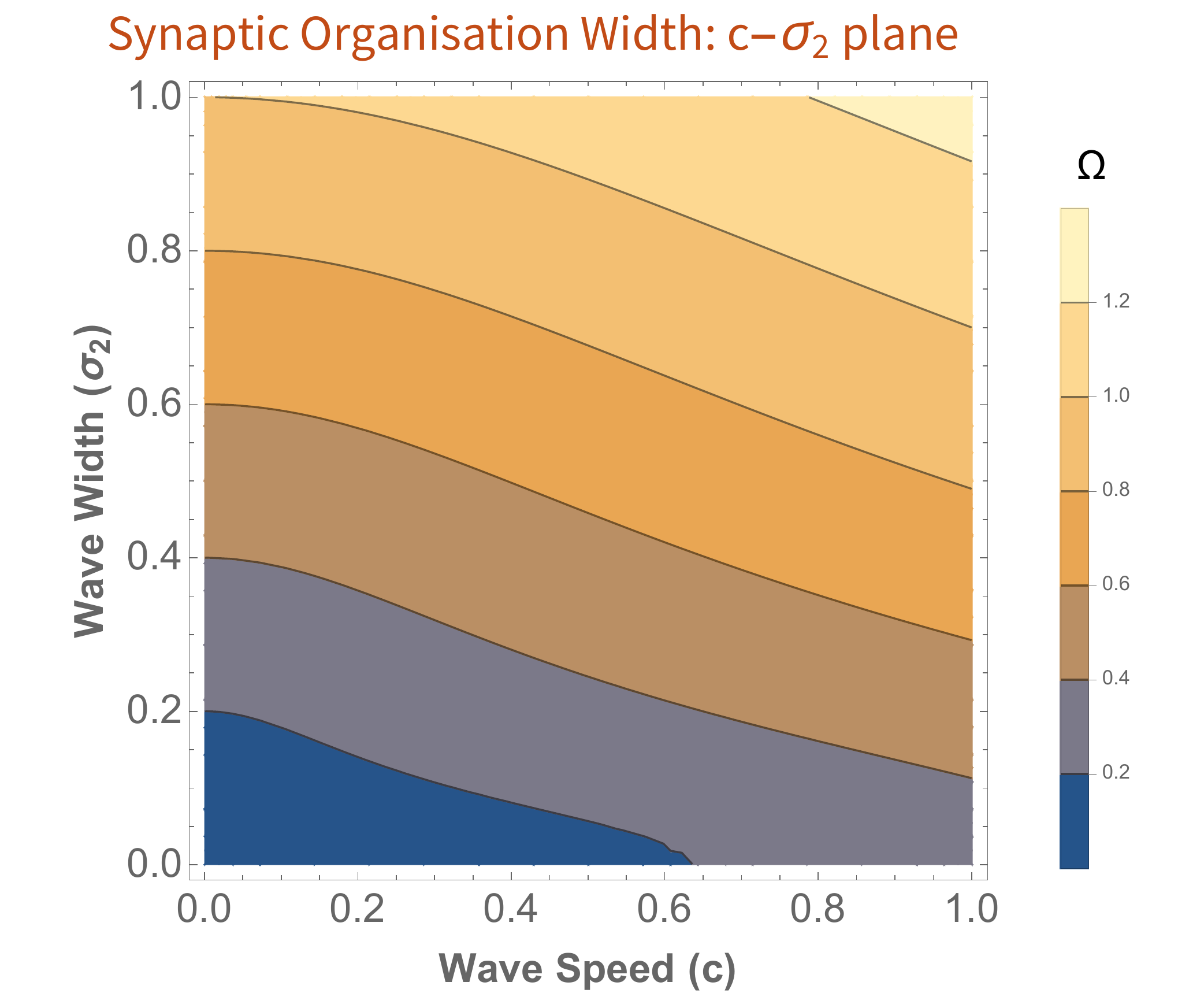}
		\caption{}
	\end{subfigure}
	\begin{subfigure}{0.45\textwidth}
		\centering
		\includegraphics[width=\textwidth]{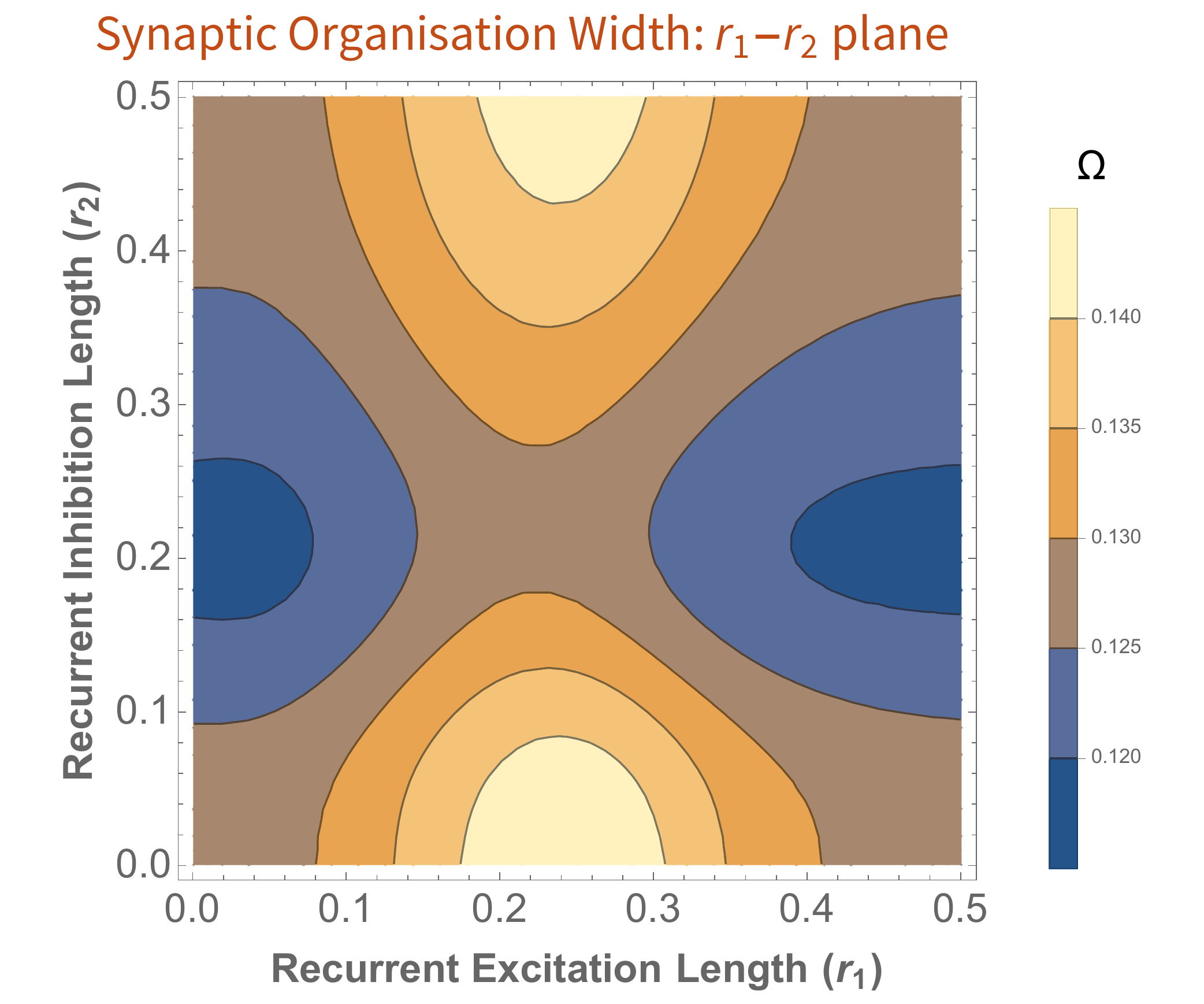}
		\caption{}
	\end{subfigure}
	\begin{subfigure}{0.45\textwidth}
		\centering
		\includegraphics[width=\textwidth]{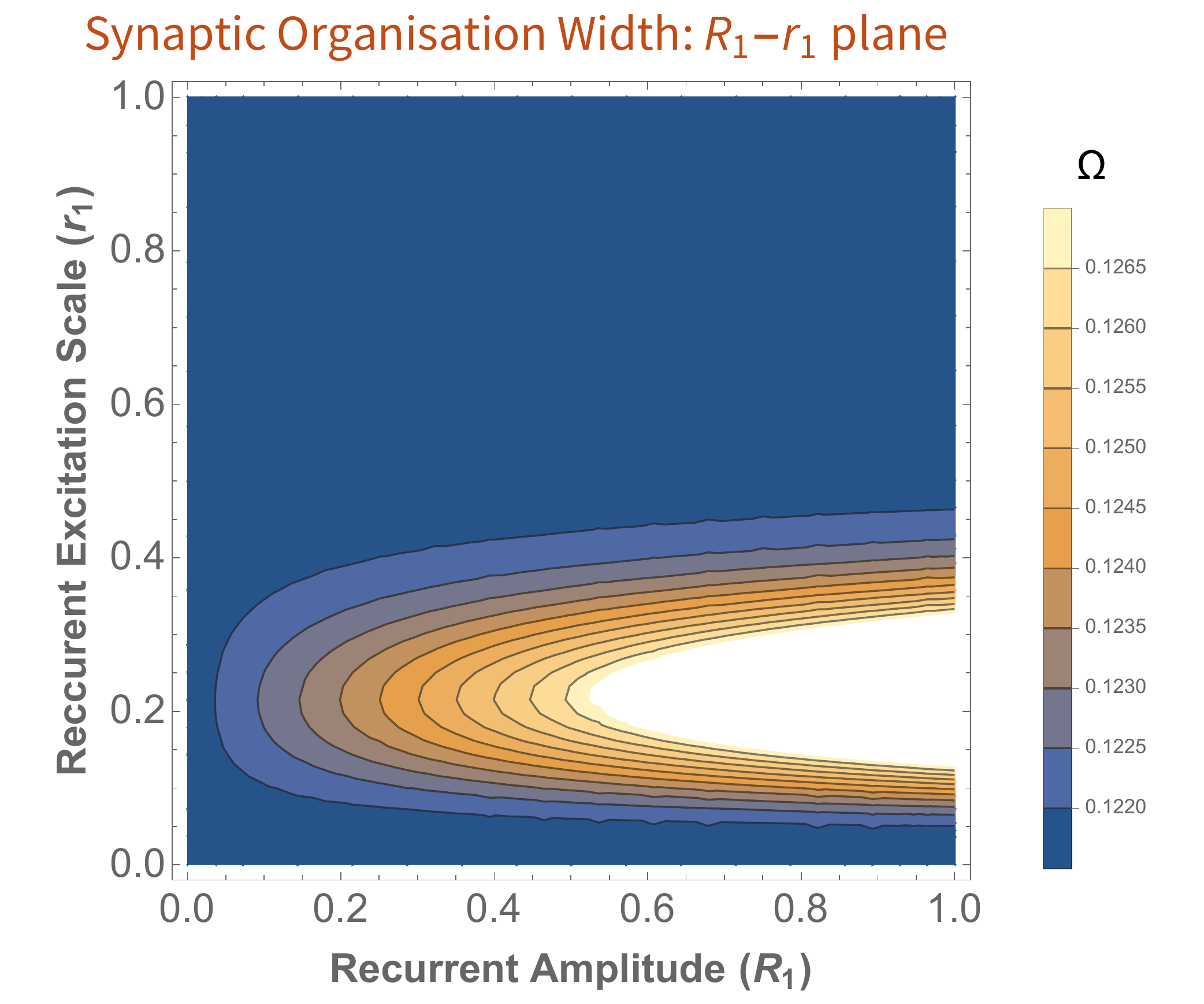}
		\caption{}
	\end{subfigure}
	\begin{subfigure}{0.45\textwidth}
		\centering
		\includegraphics[width=\textwidth]{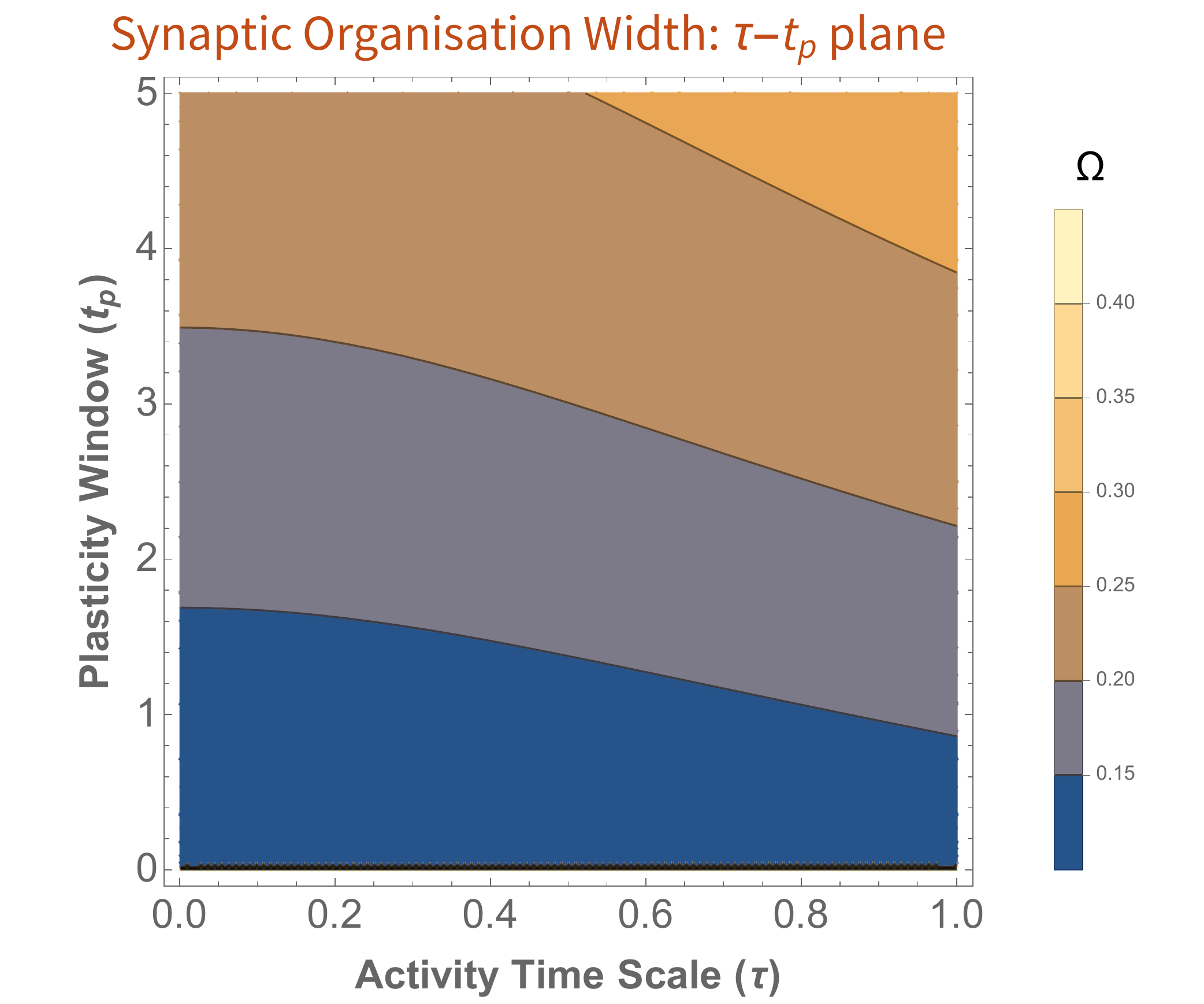}
		\caption{}
	\end{subfigure}
	\caption{\label{fig:parametereximinations} The variation in width ($\Omega$) in four distinct planar slices of the manifold of parameters which influence the models prediction of mean distribution width. Panel (a) shows that width decreases both with wave-speed and wave-width, qualitatively accounting for the differences between the wild-type and $\beta2$ mutant. Panel (b) shows that width decreases with with the ratio of excitation to inhibition in the recurrent connections $W$ suggesting a smaller zone of excitatory support decreases arbor size. There is an anti-symmetry along the line $r_1 = r_2$ which is expected as the dominant connection type switches along this line.  Panel (c) shows that width decreases with recurrent connection amplitude but the effect is not substantial. Panel (d) shows that width predominately decreases in accordance with the plasticity window time-scale, and while the activity time-scale has an effect it is not substantial.}
\end{figure*}

\paragraph{Sensitivity}
In the context of refinement it is prescient to consider which parameters affect the models 
prediction of the width which we have defined. The width given by  \eqref{eq:width} will 
satisfy $dS(k; \vec{p})/dk = 0$ which inserted into \eqref{eq:asymptoticsynapse} yields:
\begin{equation}
	\frac{dg(k; \vec{p_g})}{dk} = 0,
\end{equation}
where $\vec{p_g} = \{c, \tau, t_p, \sigma_2, R_1, r_1, r_2\}$. The width will only vary in 
accordance with these parameters which was confirmed by numerical simulation.

\paragraph{MCMC Parameter Estimation}
The $\beta2$ knock-out in mouse has the effect of altering the spatio-temporal patterns of spontaneous activity in the retina and SC during development \cite{ Stafford2009}. The mutant mice have substantially wider arborisations than in wild-type establishing the importance of activity in refining the retinotopic projection \cite{Dhande2011-jp}. Existing models have not been able to predict this wider arborisation when the patterns of activity associated with the knock-out are replicated in the models mechanisms for activity \cite{Lyngholm2019-fs}.

\begin{figure*}[h!]
	\centering
	\begin{subfigure}{0.475\textwidth}
		\centering
		\includegraphics[width=\textwidth]{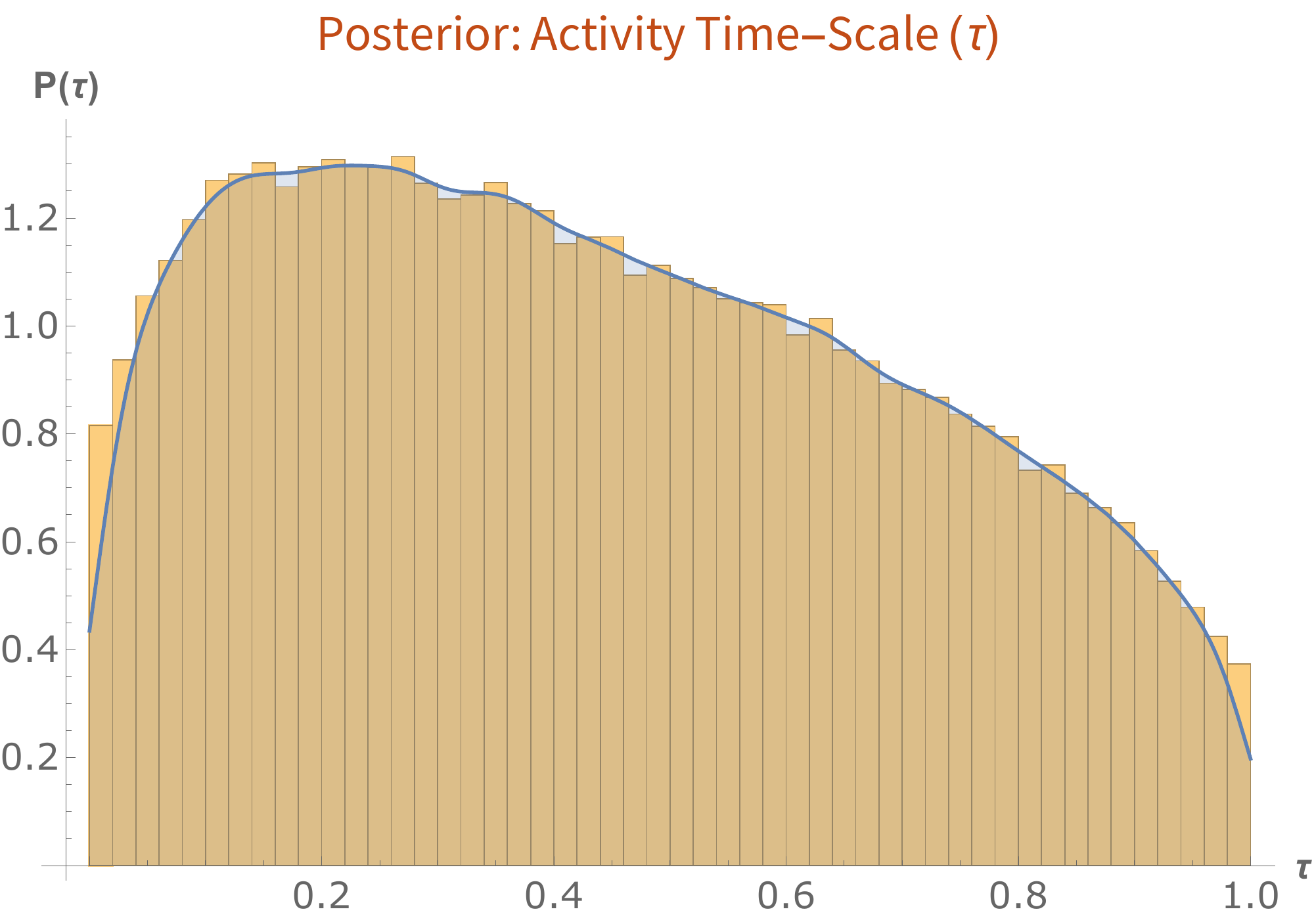}
		\caption{}
	\end{subfigure}
	\begin{subfigure}{0.475\textwidth}
		\centering
		\includegraphics[width=\textwidth]{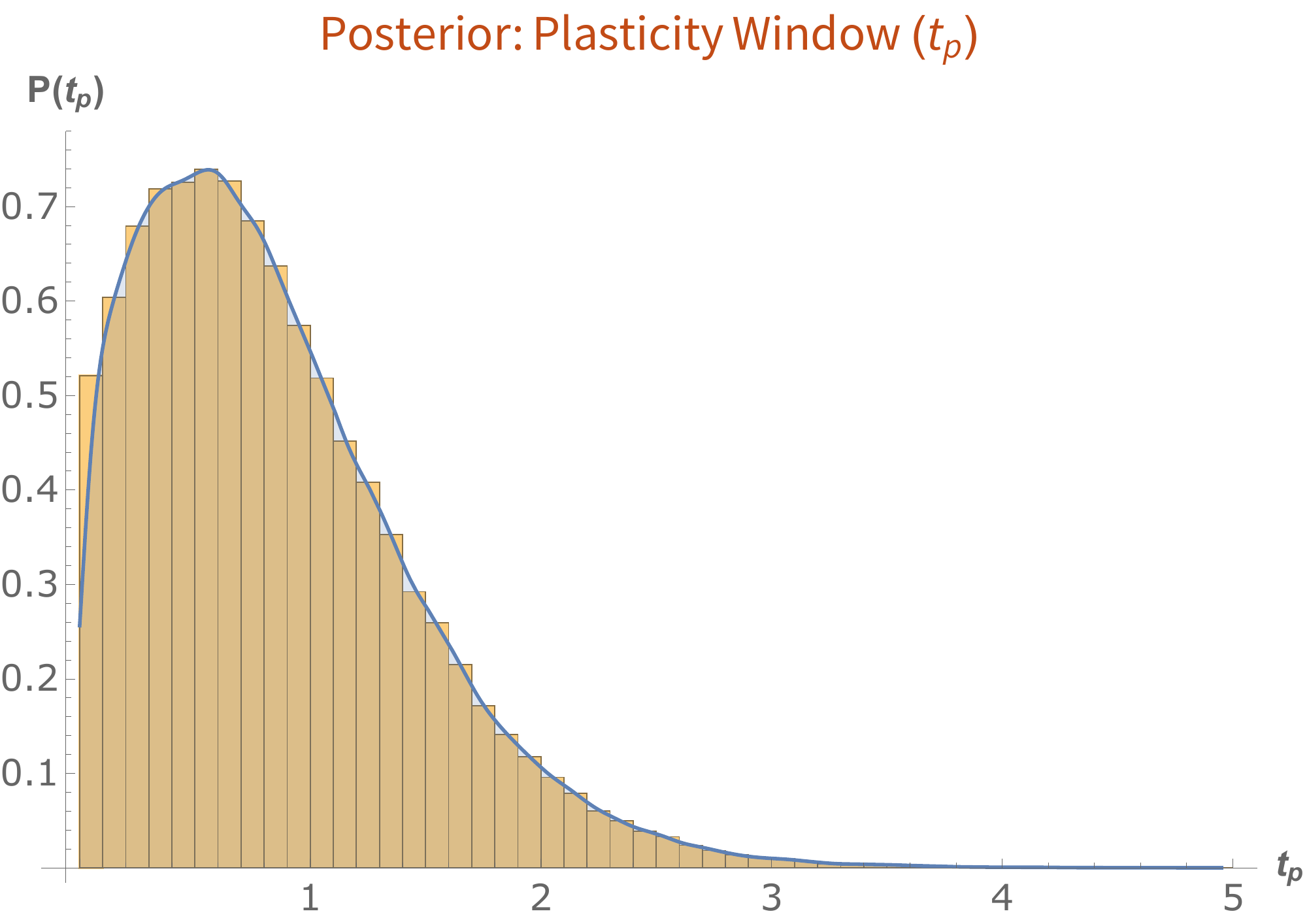}
		\caption{}
	\end{subfigure}
	\begin{subfigure}{0.475\textwidth}
		\centering
		\includegraphics[width=\textwidth]{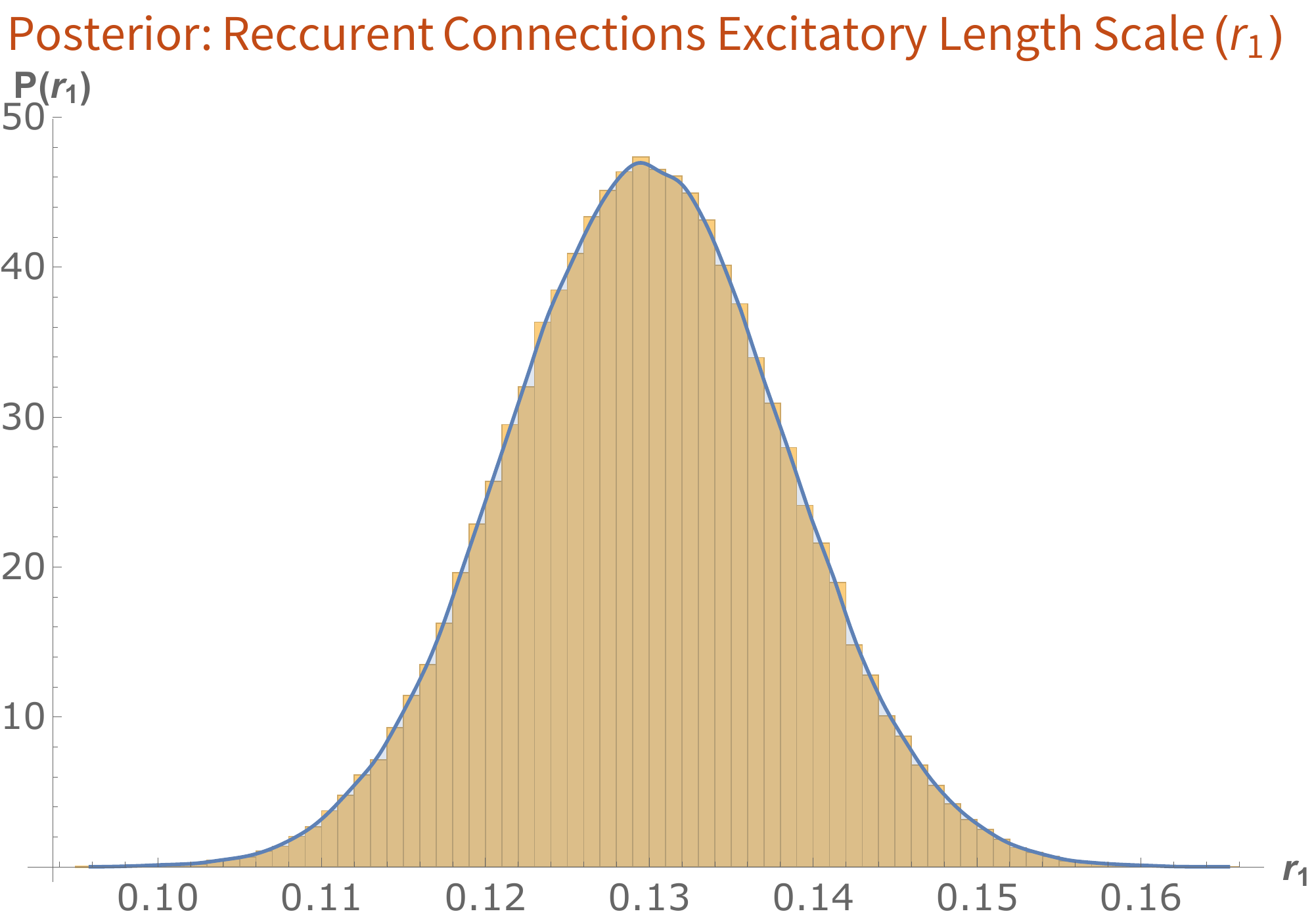}
		\caption{}
	\end{subfigure}
	\begin{subfigure}{0.475\textwidth}
		\centering
		\includegraphics[width=\textwidth]{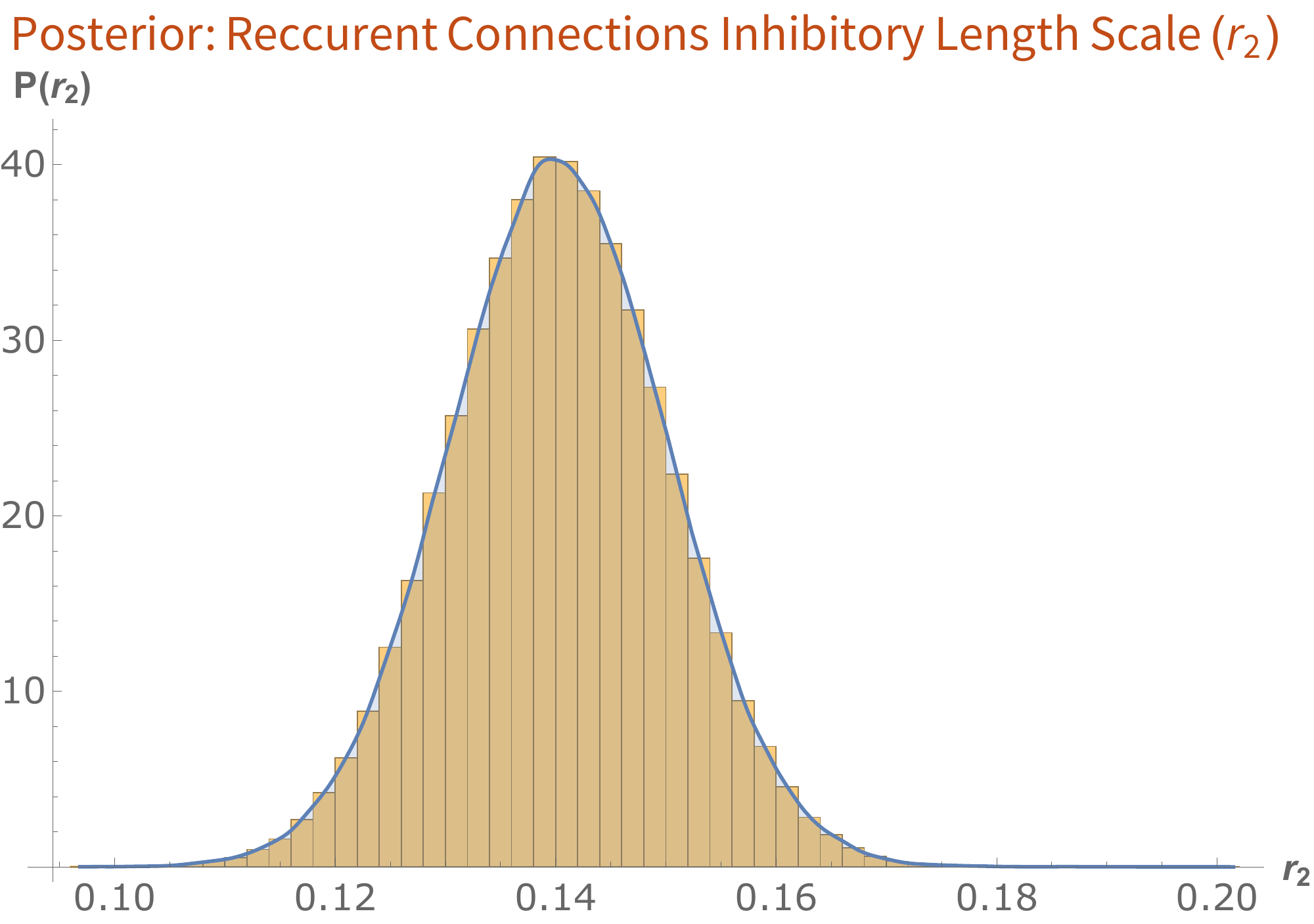}
		\caption{}
	\end{subfigure}
	\begin{subfigure}{0.475\textwidth}
		\centering
		\includegraphics[width=\textwidth]{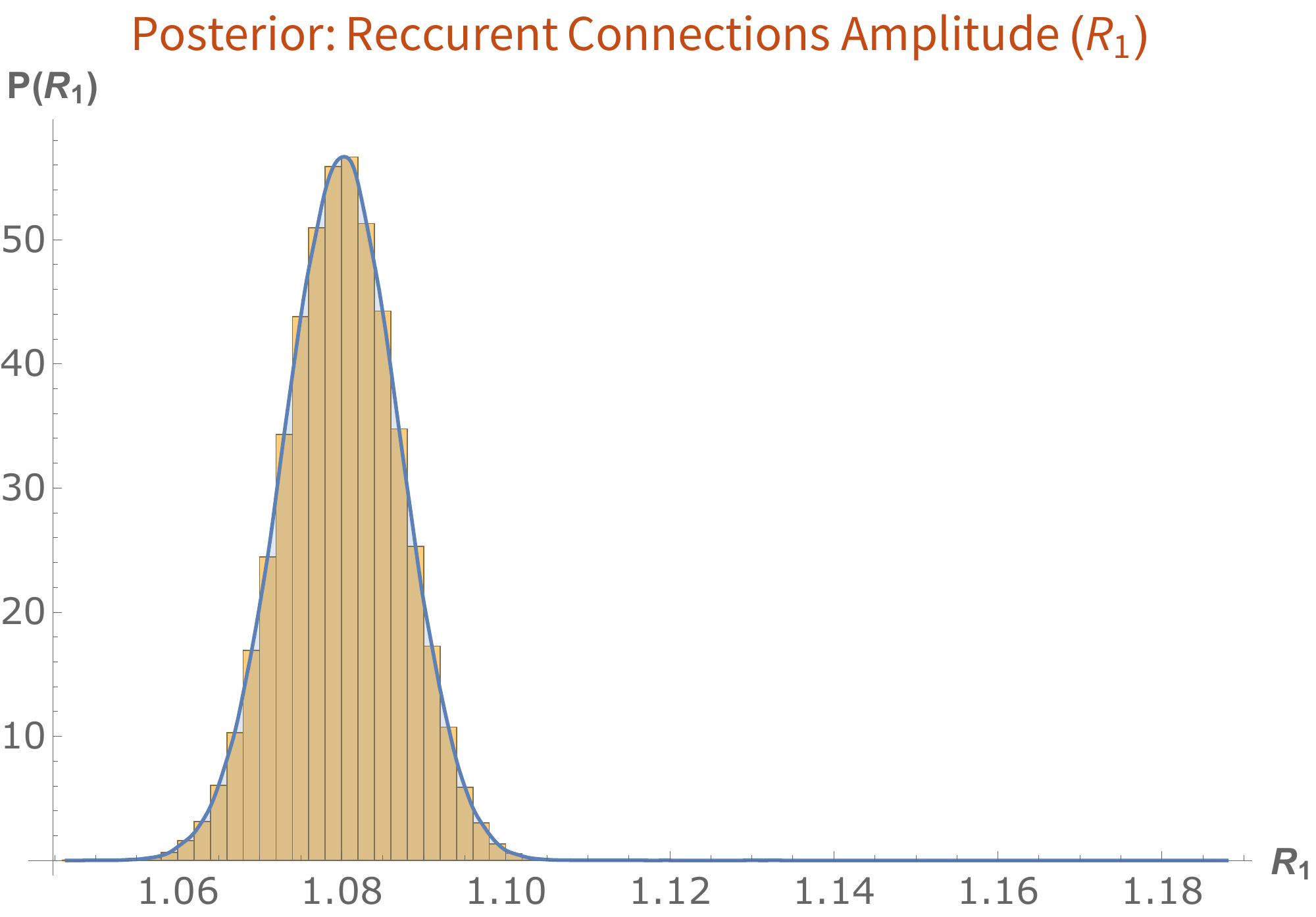}
		\caption{}
	\end{subfigure}
	\caption{\label{fig:posteriors} Panel (a) shows the posterior histogram for the time-scale of activity which is broadly distributed through the search space of [0,1]s but biased towards the lower bound. This broad distribution is concordant with the observation that the time-scale of activity induces relatively small variations in the organisation width; see Figure \ref{fig:parametereximinations}. Panel (b) shows the posterior histogram for the time scale of the plasticity window which is maximised around 0.6s. The posterior histograms for the recurrent connection parameters $(r1, r2, R1)$ are shown in Panels (c-e) and are tightly constrained by their informative priors suggesting that there is no predicted effect on these connections in the $\beta2$ mutant. For all histograms presented an empirical distribution curve was fitted and overlain in blue.}
\end{figure*}

We estimate the arborisation widths as $0.24 \pm 0.077 \text{mm}$ (wild-type) and $0.48 
\pm 0.15 \text{mm}$ ($\beta2$) by taking half the square root of the arborisation area 
reported by \cite{Dhande2011-jp}. We estimate the wave speeds as $0.13 \pm 0.015 
\text{mm s}^{-1}$ (wild-type) and $0.17 \pm  0.03 \text{mm s}^{-1}$ ($\beta2$), and the 
wave-widths as $0.11 \pm 0.012 \text{mm}$ (wild-type) and $0.20 \pm 0.012 \text{mm}$ 
($\beta2$) by taking half the total width reported by Stafford et al  \cite{Stafford2009}. We 
estimate the inhibitory and excitatory lengths scales, and amplitude of the recurrent 
connections to be $0.14 \pm 0.014 \text{mm}$ and $0.13 \pm 0.013 \text{mm}$, and $1.08 
\pm 0.01 \text{mm}$ respectively using the data reported by 
\cite{Phongphanphanee2014-in}. We take our priors on these parameters to be normal 
distributions centred on the estimates with standard deviation corresponding to the 
measurement error. We take uninformative priors on the time-scale parameters assigning 
uniform distributions on [0,1] and [0,10] for the activity time scale ($\tau$) and the plasticity 
window scale ($t_p$), respectively. The MCMC was completed using a dedicated 
Mathematica package \cite{Burkart2017}. The MCMC completed in $10^5$ iterations using 6 
chains with each parameter initialised within 10\% of the mean of its prior. The maximum 
Gelman-Rubin statistic for convergence was $1.00037$ indicating that the chains had 
converged \cite{Gelman1992-pk}. The posteriors for each parameter are reported in Figure 
\ref{fig:posteriors}. The posteriors for the recurrent connections parameters, $r1, r2$, and 
$R_1$ remained tightly constrained by their priors, indicating that the prior estimates were 
well informed and in agreement with the model. The activity time scale is broadly distributed 
throughout the range [0,1]s with a bias towards 0. The plasticity time-scale is distributed 
around a maximum of $0.56$s. The computed $R^2$ statistic was 0.81.
\newpage
\section{Discussion}
If the model is sound and the biological system is allowed sufficient time to reach a reasonable approximation of the asymptotic state then these results suggest that the computational/synaptic structures developed are primarily a result of activity dynamics. Under this model the chemotactic and competitive mechanisms serve to initialise a coarse isotropic retinotopy from which the activity dynamics can refine and ultimately dictate final synaptic organisation. This interpretation augments the understanding of the establishment of retinotopy by suggesting that the final synaptic organisation can be understood in a large part by understanding the spatio-temporal nature of the input stimulus, the recurrent connectivity, and the learning rule. Should the biological system not employ the learning routine until asymptotic stability then the model will still be able to make predictions about the final organisation given precise enough measurements of the relevant parameters. In both instances the model gives testable hypotheses the former of which has been benchmarked against the mouse wild-type and $\beta2$ knock-out mutant.

\paragraph{Organisation}
We have shown that the key aspects of the final organisational structure are dictated by the interplay between the spatio-temporal characteristics of the input stimulus and the structure of the recurrent connections. These dependencies on recurrent connections and input are in accordance with previous analysis performed with a simple Hebbian rule and static input \cite{Takeuchi1979-zy}; the model proposed here, however, allows for richer construction in terms of specifying the input and connections by realising full temporal and spatial dynamics, and more complex structure in the final organisation. We have introduced regularisation rules which allow this organisation to take non-trivial structure when supplemented by system noise which we have assumed is able to be renormalised in downstream biological calculations or via some other mechanism. The regularisation necessitates neurotrophic factors being expressed during development. Finally, the measurable aspects of the organisation are dictated by the precise realisation of the relevant biological parameters.
\paragraph{Refinement}
The results indicate parameter dependence on wave-speed, wave-width, plasticity time-scales, and the ratio of excitation to inhibition widths in the recurrent connections. Principally, parameter changes that would lead to a tighter correlation structure such as smaller wave-widths, slower wave-speeds, and smaller excitatory zones lead to a smaller width of topographic refinement. Interestingly, the time-scale of the plasticity rule has an effect of the width of the final organisation. The $\beta2$ knock-out provides a phenomenological test of this component of model. The knock-out exhibits fast-moving, and hyper-correlated, retinal waves which lead to an imprecise topographic mapping - an effect that has not been captured in existing models. Our model suggests that an increase in wave-speed or wave width will lead to a less-refined map reproducing the results of the knock-out \textit{in silico}; see Figure \ref{fig:parametereximinations}. 

An MCMC parameter estimation was performed using known errors-in-measurement of 
wave-speed, wave-width, and organisation width in wild-type and the $\beta2$ mutant. The 
model predicts the expected mean width of both wild-type and the $\beta2$ knock-out 
within standard error when parametrized by likelihood maximising parameters and provides a 
good explanation of the variance between the wild-type and mutant ($R^2$ = $0.81$).  We 
found the model to be insensitive to the time-scale of activity with the posterior assuming a 
broad posterior over $[0,1]$s with a slight bias towards lower values suggesting that the 
activity time-scale does not account for much of the variance in organisation width. The 
posteriors of the parameters of the recurrent connections were largely dictated by their 
priors suggesting that the priors estimated from available are informative and that the 
$\beta2$ knock-out does not have a substantial effect on the recurrent connections, as 
expected. We do not expect the time-scale of the plasticity window to be affected by the 
knock-out and thus the MCMC allows us to estimate this parameter on the order of seconds. 
The timescale of the plasticity window in two closely related biological systems, the Xenopus 
retinotectal projection and rat visual cortex, are estimated to be on the order of $10^{-2}$ 
seconds \cite{Froemke2002-be, Zhang2000-lb}. Plasticity windows can have significantly 
longer time-scales on the order of 10s of minutes \cite{Citri2008-kv}. Our estimate is notably 
higher than what has been observed in similar systems but is in agreement with the typical 
duration of a wave of spontaneous activity in the developing retina in mice \cite{Xu2015-uc}. 
We might expect a deviation as we are analysing a different biological system. This result 
suggests that the plasticity windows in this system are calibrated to integrate all information 
contained in a spontaneous wave event.
\paragraph{Future Directions}
The analysis presented here has made simplifying assumptions about the statistical 
properties of spontaneously generated waves: these assumptions cannot be expected to 
hold in general. The analysis was also restricted to one dimension: the two-dimensional case 
has a much richer topology and is more relevant as the topographic projection is typically 
organised as a sheet. The analysis can be trivially extended into the plane by using the same 
assumption: every wave-direction is equiprobable. More realistic wave-statistics can be 
simulated numerically and examining the properties of the synaptic distribution generated by 
the data of spontaneous activity in mouse is a future research direction; for example using 
the model of activity proposed by Godfrey and Eglen \cite{Ackman2012-uu, Demas2003-xf, 
Godfrey2009-rs}.

The model predicts that the time-scale of the plasticity window in developing mouse SC neurons is 1-2 orders of magnitude higher than the scale typically used to describe neuronal plasticity in analogous systems. While we do not claim that this prediction represents a ground truth, the model makes several simplifying assumptions and estimations, it is a good candidate for experimental falsification.
\paragraph{Conclusion}
We have developed a modelling framework in which the effects of rich spatio-temporal patterns of activity on topographic refinement can be analysed alongside system specific measurements of parameters. The model posits that the final synaptic organisation is dictated in a large part by the characteristics of this activity suggesting a more involved role for activity, spontaneous or otherwise, in the developing visual system. The model explains topographic defects observed in the $\beta2$ mutant which has had its spontaneous activity patterns altered and on the basis of the mutant and wild-type offers a prediction of the time-scale on which Hebbian refinement operates in mouse development.

\paragraph{Generating Code}
The code used to perform the analysis and generate the images in this project may be found at \url{https://github.com/Nick-Gale/Neural_Field_Theory_Topopgraphic_Development}.

\section*{Appendix: A}
\begin{lemma}
	\label{lemma:decay}
	The synaptic change $\frac{dS_p}{dT}$ induced by a given input stimulus $A_p$ which 
	terminates at some arbitrary $t_1$ can be well approximated by a similar input stimulus 
	$A$ that terminates at $t=\infty$ i.e. $|\frac{dS_p}{dT} - \frac{dS}{dT}|<\epsilon$ for 
	$\epsilon \ll 1$. 
\end{lemma}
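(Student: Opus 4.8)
The plan is to work directly with the synaptic-change functional \eqref{eq:robinson} and reduce the claim to a bound on the contribution of the ``extra'' piece of stimulus that is present in $A$ but absent in $A_p$. Writing the two stimuli as $A_p(y,t)=(\Theta(t-t_0)-\Theta(t-t_1))h(r(y,t))$ and $A(y,t)=\Theta(t-t_0)h(r(y,t))$, their difference is $\Delta A(y,t):=A(y,t)-A_p(y,t)=\Theta(t-t_1)h(r(y,t))$, which is supported on $\{t\ge t_1\}$. Because the map from stimulus to synaptic change is bilinear --- the integrand of \eqref{eq:robinson} is a product of the post-synaptic rate $U$ and the pre-synaptic rate $A$, both driven by the same input --- I would first expand $\frac{dS}{dT}-\frac{dS_p}{dT}$ into terms each carrying at least one factor of $\Delta A$ or of the induced response difference $\Delta U$. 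The regularisation term $\lambda S$ and the noise $\kappa\eta$ in \eqref{eq:stdpfinal}, being independent of the choice of stimulus, cancel in the difference, so only the correlation integral contributes.

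The next step is to control the induced response $\Delta U$. From the Activity Dynamics section the linearised response is $\delta\hat U(k,\omega)=\hat S(k)\Gamma(k,\omega)\hat a(k,\omega)$ with $\Gamma=(1+i\tau\omega-\hat W(k))^{-1}$; inverse-transforming in time, $\Gamma$ is a causal, exponentially-relaxing kernel with decay rate $(1-\hat W(k))/\tau$, so $\Delta U$ is supported, up to a relaxation smear of scale $\tau$, on $\{t\gtrsim t_1\}$ and stays bounded since all fields are Schwartz. Substituting into the expansion and applying the triangle inequality, every surviving term carries the plasticity weight $H(s)$ evaluated at a lag pushed out to $s\gtrsim t_1$ (or a field factor evaluated there). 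Using the explicit envelope $|H(s)|\le A_+e^{-|s|/t_p}$ together with the uniform bounds $\|U\|_\infty,\|A\|_\infty<\infty$, the difference is dominated by a tail integral of the form
\begin{equation*}
	\left|\frac{dS_p}{dT}-\frac{dS}{dT}\right| \le C\int_{t_1}^{\infty} A_+ e^{-s/t_p}\,ds = C A_+ t_p\, e^{-t_1/t_p},
\end{equation*}
which is made smaller than any prescribed $\epsilon$ because $t_p$ is much shorter than the wave-termination time $t_1$, so $t_1/t_p\gg 1$. This is precisely the ``rapid decay of the plasticity window'' mechanism asserted in the text.

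The hard part, and the step I would treat most carefully, is the cross term in which the finite stimulus $A_p$ (supported on $[t_0,t_1]$) is correlated against the post-synaptic response $\Delta U$ to the tail. Here the $H$-weight is not obviously evaluated at large lag, and one must use causality of $\Gamma$ to confine the overlap of the two supports to a narrow late-time window and then exploit the exponential relaxation to bound $\Delta U$ on $[t_0,t_1]$ by a quantity decaying in $t_1$. Controlling this induced response --- rather than merely the stimulus difference $\Delta A$ --- is the genuine obstacle; once $\Delta U$ restricted to the support of $A_p$ is shown to be exponentially small in $t_1$, the remaining estimates are routine applications of boundedness and the explicit decay of $H$. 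An alternative packaging I would keep in reserve is the frequency-domain form $\frac{1}{2\pi}\int \delta\hat U\,\hat H^*\,\delta\hat A^*\,d\omega$, bounding the difference through continuity of this bilinear functional in a Schwartz seminorm in which $\Delta A$ is small; but the time-domain tail estimate matches the stated intuition most transparently.
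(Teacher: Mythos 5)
Your proposal is correct and follows essentially the same route as the paper's proof: both split the difference of the bilinear correlation functional into a stimulus-tail term (controlled by the decay of the input beyond $t_1$) and a response-difference term (controlled by the rapid decay of the plasticity window $H$), and conclude both vanish as $t_1\to\infty$. Your version is in fact somewhat more explicit than the paper's --- you derive the smallness of $\Delta U$ near the support of $A_p$ from causality and exponential relaxation of $\Gamma$, where the paper simply posits bounds $\xi$, $\epsilon_1$, $\epsilon_2$ that tend to zero --- but the decomposition and the underlying mechanism are the same.
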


\begin{proof}
	Consider a function $A(y,t)$ which propagates to infinity and induces and activity in the 
	post-synaptic field of $U(x,t)$. For physical reasons this function must decay rapidly at 
	infinity implying for all real $t_j$:
	\begin{equation}
		\int_{t_j}^{\infty} A(y,t)dt =\epsilon_j.
	\end{equation}
	Then, due to the rapid decay of the of plasticity function we also have that for all physical 
	realisations of $u$ and for all $t$:
	\begin{equation}
		\int_{-\infty}^{\infty} H(\tau) U_i(t+\tau)d\tau = \xi_i <\infty.
	\end{equation}
	Then, consider the functions $A(y,t)=\Theta(t) h(y,t)$ and 
	$A_p(y,t)=(\Theta(t)-\Theta(t-t_1))h(y,t)$, and the functions $U(x,t)$ and $U_p(x,t)$ which 
	are induced activities from stimulus $A$ and $A_p$. Observe that as a result of the rapidly 
	decaying plasticity window there exists some $\xi$ such that:
	\begin{equation} 
		\left|\int_{t_1}^{\infty} H(\tau) (U_p(x,t)-U(x,t)) d\tau \right|< \xi,
	\end{equation} and: 
	\begin{equation}
		\left|\int_{-\infty}^{0} H(\tau) (u_p(x,t)-u(x,t)) d\tau \right|< \xi,
	\end{equation}
	for all $x$ and $t$. Also, observe that in the limit $t_1 \rightarrow \infty$, $\xi$ tends to 
	zero. Now let $\epsilon_2 = \xi \int_{0}^{t_1} A(y,t)dt$ and note that in the limit $t_1 
	\rightarrow \infty$ this $\epsilon_2$ will also tend to zero, as the integral of $A(y,t)$ is 
	bounded. Finally, suppose $\int_{t_1}^{\infty} A(y,t) dt < \epsilon_1$. Then, $\varepsilon = 
	K_0 (\xi \epsilon_1 + 2\epsilon_2)$ may be made arbitrarily small for sufficiently large $t_1$. 
	Now consider the synaptic change induced by the truncated function $A_p$:
	\begin{align*}
		\frac{dS_p(x,y,T)}{dT} & = K_0  \int_{-\infty}^{\infty }A_p(y,t) \int_{-\infty}^{\infty}   
		H(\tau)U_p(x,t+\tau)   d\tau dt\\
		&<  K_0  \int_{0}^{t_1}h(y,t) \left (\int_{-\infty}^{\infty}   H(\tau)U(x,t+\tau) d\tau  + 2\xi 
		\right)  dt\\
		&<  K_0  \int_{-\infty}^{\infty}\Theta(t) h(y,t) \int_{-\infty}^{\infty}   H(\tau)U(x,t+\tau) 
		d\tau dt +K_0 \epsilon_1 \xi+2K_0 \epsilon_2\\
		&<  \frac{dS(x,y,T)}{dT} + \epsilon.
	\end{align*}
	Therefore, it is a sufficiently good approximation to consider the stimulus propagating to 
	infinity, rather than the stimulus truncated at time $t=t_1$ when calculating the synaptic 
	change.
\end{proof}




\providecommand{\bysame}{\leavevmode\hbox to3em{\hrulefill}\thinspace}
\providecommand{\MR}{\relax\ifhmode\unskip\space\fi MR }
\providecommand{\MRhref}[2]{%
  \href{http://www.ams.org/mathscinet-getitem?mr=#1}{#2}
}
\providecommand{\href}[2]{#2}

\ACKNO{This research was funded in part by the Wellcome Trust (Grant Reference: 215153/Z/18/Z) and in part by the Australian Postgraduate Award. For the purpose of open access, the author has applied a CC BY public copyright licence to any Author Accepted Manuscript version arising from this submission.}

\end{document}